\documentclass[a4paper, 10pt, conference]{ieeeconf}

\IEEEoverridecommandlockouts

\overrideIEEEmargins

\usepackage{graphics}
\usepackage{epsfig}
\usepackage{times}
\usepackage{amsmath}
\usepackage{amssymb}
\usepackage{multirow}
\usepackage{array}
\usepackage{amsfonts}

\usepackage{amsthm}
\usepackage{todonotes}
\usepackage{comment}
\usepackage{cite}
\usepackage{booktabs}

\title{\LARGE \textbf{A Game-Theoretic Analysis of the Social Impact of Connected and Automated Vehicles}}

\author{Ioannis Vasileios Chremos, \textit{Student Member, IEEE}, Logan E. Beaver, \textit{Student Member, IEEE} \\ Andreas A. Malikopoulos, \textit{Senior Member, IEEE}
\thanks{This research was supported in part by ARPAE's NEXTCAR program under the award number DE-AR0000796 and by the Delaware Energy Institute (DEI).}
\thanks{The authors are with the Department of Mechanical Engineering, University of Delaware, Newark, DE 19716 USA (emails: \tt\small{ichremos@udel.edu};\tt\small{lebeaver@udel.edu};
    \tt\small{andreas@udel.edu}.)}%
}
\theoremstyle{definition}
\newtheorem{assumption}{Assumption}
\newtheorem{definition}{Definition}
\theoremstyle{plain}
\newtheorem{theorem}{Theorem}
\newtheorem{corollary}{Corollary}
\newtheorem{lemma}{Lemma}
\newtheorem{proposition}{Proposition}
\newtheorem{remark}{Remark}

\theoremstyle{definition}
\newtheorem{example}{Example}

\begin{document}

\maketitle
\thispagestyle{empty}
\pagestyle{empty}

\begin{abstract}

In this paper, we address the much-anticipated deployment of connected and automated vehicles (CAVs) in society by modeling and analyzing the social-mobility dilemma in a game-theoretic approach. We formulate this dilemma as a normal-form game of players making a binary decision: whether to travel with a CAV (CAV travel) or not (non-CAV travel) and by constructing an intuitive payoff function inspired by the socially beneficial outcomes of a mobility system consisting of CAVs. We show that the game is equivalent to the Prisoner's dilemma, which implies that the rational collective decision is the opposite of the socially optimum. We present two different solutions to tackle this phenomenon: one with a preference structure and the other with institutional arrangements. In the first approach, we implement a social mechanism that incentivizes players to non-CAV travel and derive a lower bound on the players that ensures an equilibrium of non-CAV travel. In the second approach, we investigate the possibility of players bargaining to create an institution that enforces non-CAV travel and show that as the number of players increases, the incentive ratio of non-CAV travel over CAV travel tends to zero. We conclude by showcasing the last result with a numerical study.

\end{abstract}

\section{Introduction}\label{sec:intro}

The reality of connected and automated vehicles (CAVs) is coming fast to realization \cite{Marletto2019}. Similarly, with other past technologies, CAVs promise to be an incoming disruptive innovation with vast technological, commercial, and regulatory dimensions. Recently, there has been a significant amount of work on the technological or social impact of CAVs (mostly focusing on congestion, emissions, energy consumption, and safety). CAVs will transform the transportation system of today and revolutionize mobility. On the other hand, one expected social consequence of CAVs is to reshape urban mobility in the sense of altered tendency-to-travel, and thus, highly increase demand in the transportation system. To elaborate on this point, evident from similar technological revolutions (e.g., elevators), human social tendencies and society's perspective have changed the way a technology is used and applied \cite{Bernard2014}. Thus, we can most certainly expect that the deployment of CAVs in society will have unexpected outcomes, in the form of rebound effects (e.g., increased overall vehicle miles traveled, decreased use of public transportation, higher demand for road usage, etc.) Although there have been numerous studies that provide qualitative analysis for the social impact \cite{Zmud2017,Taiebat2018}, our game-theoretical approach aims to provide a formal analysis of the human decision-making regarding the expected social-mobility dilemma of the future travelers.

One may ask ``Why do we use Game Theory to analyze such a problem?" It is the authors' belief that the emerging transportation systems - CAVs, shared mobility, electric vehicles - will be characterized by their socio-economic complexity: (1) improved productivity and energy efficiency, (2) widespread accessibility, and (3) drastic urban redesign and evolved urban culture. This characteristic can naturally be modeled and analyzed using notions from Mathematical Psychology and Game Theory. One of the main arguments in this paper is that the social interaction of humans and CAVs can be modeled as a ``social dilemma." That is, we are only concerned with the impact of the human decision before the vehicle's engine is even turned on. Informally, a social dilemma is any situation where there is a subtle yet unwanted discrepancy between individual and collective interest. It is for this reason why the authors of this paper argue that social dilemmas are the appropriate models to be looking at instead of, for example, congestion games. We want to emphasize that we are interested in the human choice of commute and not the selfish routing on a road network.



By considering a normal-form game of $n$ players, we acquire a significantly improved way to realistically model social dilemmas that occur in real-life, and most importantly, we obtain a multiplayer structure that reflects Garrett Hardin's ``Tragedy of the Commons." From its conception, the Tragedy of the Commons has been an important problem in economics and other fields as it describes a plethora of phenomena in which independent members of a society selfishly attempt to maximize their benefit of utilizing at least one common resource which is scarce. Thus, the individuals' selfishness leads to the collective degradation of society's well-being. Noteworthy, even though the decision-makers are selfish and their decisions aim to maximize personal gain, they end up depleting the resource with unavoidable repercussions and losses \cite{Dawes1973}. In our context, the common resource is the road infrastructure shared by all the travelers, and the utilization is whether to travel with a CAV or not. Intuitively, one can expect that if all travelers make the selfish decision to use a CAV for commuting, then congestion is unavoidable.

In the first decades of the $20$th century, Arthur Cecil Pigou argued that ``if a system's decision-makers take autonomous decisions, then the resulting collective outcome most probably will be inefficient." This key observation is evident in many different studies and analyses of in transportation problems.
Social dilemmas have been extensively studied for systems that exhibit overpopulation, resource depletion, gridlock, and pollution \cite{Dawes1980,Platt1973,Stern1992}, while the Prisoner's dilemma (PD) game has been used to model vehicle congestion in a transportation network \cite{Joshi2005,Whitelegg1997} where travelers decide to continue driving their vehicles in congested and polluted cities. One possible solution to the PD was studied in \cite{Okada1993} in which inspired from notions of classical arguments on the theory of social contract, the author investigated whether cooperation might emerge in a social dilemma game with institutional arrangements.
Although the social effect of selfish-mobility behavior in routing networks of regular and autonomous vehicles has been studied \cite{Mehr2018}, it seems that the problem of how CAVs will affect human tendency-to-travel and mobility frequency has not been adequately approached yet. On the other hand, analytical frameworks have been proposed to quantify and evaluate the impacts of CAVs from the technological perspective \cite{Jackeline2016a,Jackeline2016b}. Furthermore, coordination of CAVs at different traffic scenarios, e.g., intersections or vehicle-following, have been extensively evaluated in the literature \cite{Malikopoulos2018,Jackeline2017}. Recently, there has been research done on the rebound effects which might arise from the introduction of automation in a transportation system \cite{Singleton2018,Soteropoulos2019}. For a detailed analysis of the effects of CAVs technologies on travel demand, see \cite{Auld2017}. Recently, in the literature, it has been recognized that further research is required to identify and understand the potential impacts of emerging mobility \cite{Sarkar2016,Zhao2019}.


The contributions of this paper are:
\begin{enumerate}
    \item we provide a game-theoretic analysis of the conflict of interest and model the social-mobility dilemma as a social dilemma, and
    \item we apply two different in mindset mechanisms and approaches that attempt to prevent negative outcomes, e.g., similar to the Tragedy of the Commons.
\end{enumerate}
Several research efforts reported in the literature have focused on studying social behavior regarding semi-autonomous driving and the selfish social decision-making of choosing a route to commute in a transportation network \cite{Mehr2019a}. A key difference between our work and the frameworks already reported in the literature is that we focus on modeling the human decision-making of which mode of transportation to be used rather than modeling selfish routing. Our analysis will complement these efforts by providing a framework that attempts to integrate the human social behavior in a mobility system consisting of CAVs. Moreover, our work in this paper expands the much-needed discussion on understanding the social impact and implications of CAVs by providing insights on how human behavior might react to an emerging mobility system. More specifically, our most important contribution is to rigorously show that without a well-thought intervention via regulations or incentives, a society of selfish travelers will make the wrong collective decisions, and thus, we will end up with a catastrophically sub-optimal performance of the emerging mobility system.

The remaining of the paper proceeds as follows. In Section \ref{sec:math_prelims}, we provide an overview of Game Theory notions. In Section \ref{sec:game_formulation}, we present our formulation of the social decision-making regarding the CAVs as a normal-form game and show that it is equivalent to a PD game. In Section \ref{sec:first_approach}, we introduce and study a preference structure, and in Section \ref{sec:second_approach}, we apply a framework of institutions and provide a numerical study of the results. Finally, we offer some concluding remarks and discuss future work in Section \ref{sec:conclusions}.

\section{Mathematical Preliminaries}\label{sec:math_prelims}

In this section, we present a brief overview of important notions from non-cooperative Game Theory. First, we assume that the players of the game are \textit{rational}, in the sense that each player's objective is to maximize the expected value of her own payoff. In addition, we assume that the players are \textit{intelligent}, i.e., each player has full knowledge of the game and has the ability to make any inferences about the game that we, the designers, can make. In order to develop a rigorous framework that analyzes the social dilemma as a game, we need to formally define a few important notions of Game Theory that will prove instrumental in our analysis.

\begin{definition}\label{definition1}
    A finite normal-form game is a tuple $\mathcal{G} = \langle \mathcal{I}, \mathcal{S}, (u_i)_{i \in \mathcal{I}} \rangle$, where
        \begin{itemize}
            \item $\mathcal{I} = \{1, 2, \ldots, n\}$ is a finite set of $n$ players with $n \geq 2$;
            \item $\mathcal{S} = S_1 \times \cdots \times S_n$, where $S_i$ is a finite set of actions available to player $i \in \mathcal{I}$ with $s = (s_1, \ldots, s_n) \in \mathcal{S}$ being the action profile;
            \item $u = (u_1, \ldots, u_n)$, where $u_i : \mathcal{S} \to \mathbb{R}$, is a real-valued utility function for player $i \in \mathcal{I}$.
        \end{itemize}
\end{definition}

\begin{definition}
    Let $S_{i}$ be the strategy profile of player $i$, $s_i, s_i ' \in S_{i}$ be two strategies of player $i$, and $S_{- i}$ be the set of all strategy profiles of the remaining players. Then, $s_i$ strictly dominates $s_i '$ if, for all $s_{- i} \in S_{- i}$, we have $u_i(s_i, s_{- i}) > u_i(s_i ', s_{- i})$. Also, a strategy is strictly dominant if it (strictly) dominates any other strategy.
\end{definition}

\begin{definition}\label{definition2}
    A player $i$'s best response to the strategy profile $s_{- i} = (s_1, \ldots, s_{i - 1}, s_{i + 1}, \ldots, s_n)$ is the strategy $s ^ *_i \in S_i$ such that $u_i(s ^ *_i, s_{- i}) \geq u_i(s_i, s_{- i})$ for all $s_i \in S_i$. A strategy profile $s$ is a Nash equilibrium (NE) if, for each player $i$, $s_i$ is a best response to $s_{- i}$.
\end{definition}

Next, for completeness, we define the notion of Pareto domination. First, an ``outcome" of a game is any strategy profile $s \in \mathcal{S}$. Intuitively, an outcome that Pareto dominates some other outcome improves the utility of at least one player without reducing the utility of any other.

\begin{definition}\label{definition4}
    Let $\mathcal{G}$ and $s ', s \in \mathcal{S}$. Then a strategy profile $s '$ Pareto dominates strategy $s$ if, $u_i(s ') \geq u_i(s)$, for all $i$, and there exists some $j \in \mathcal{I}$ for which $u_j(s ') > u_j(s)$.
\end{definition}

Pareto domination is a useful notion to describe the social dilemma in a game. However, Pareto-dominated outcomes are often not played in Game Theory; a NE will always be preferred by rational players. For further discussion of the Game Theory notions presented above, see \cite{Shoham2008}.

Next, we provide our formulation and show that it is equivalent to the PD game.

\section{Game-Theoretical Formulation}\label{sec:game_formulation}

We consider a society of $n \in \mathbb{N}$, $n > 2$, travelers who seek to commute on a city's transportation network. We consider the road infrastructure as the common, yet limited, resource that is open-access and shared with all travelers. Each traveler has the option to utilize the roads by traveling in a CAV, which in turn contributes to the capacity of the roads. We expect each traveler to utilize the roads selfishly.

\begin{assumption}
    We assume full CAV-penetration, and so each traveler may choose either to travel in a CAV or use another mode of transportation, e.g., train, light rail, bicycling, or walking, thereby not contributing to congestion.
\end{assumption}

In a game-theoretic context, each traveler represents a rational player who has two possible actions, namely either $NC$ for not traveling in a CAV (non-CAV travel) or $C$ for traveling in a CAV (CAV travel). From now on, we shall use the terms ``player" and ``traveler," interchangeably.
All players receive a benefit $c \in \mathbb{R}_{> 0}$ for deciding to commute in the society. On the other hand, traveling using CAVs conveys benefits arising from flexibility, privacy, convenience, etc. So, if a player chooses to travel in a CAV, then they receive a benefit of $c + d$, where $d \in \mathbb{R}_{> 0}$ with $d \cdot (n - 2) > 2$ (this ensures that $d$ provides a significant incentive for CAV travel yet the lower bound decreases as $n$ increases). However, traveling in a CAV is naturally the selfish choice as it exploits the society's resources. Hence, for each player that decides to travel in a CAV, a cost of $e \in \mathbb{R}_{> 0}$ is imposed to the society as a whole and is paid out equally by all players, i.e., we define $\phi = e / n$ as the damage done to society. Without losing any theoretical insight, let us define $e = d + 1$ and assume that the original benefit $c$ is strictly greater than $e$.



\begin{remark}
    In our formulation, we want to capture the potential consequence of the players' decision to travel in a CAV. For this reason, contributing to the capacity of the roads (creating congestion, pollution, etc.) is represented by the cost and overall by the damage done to society.
\end{remark}

We can write the final form of player's $i$ payoff for traveling in a CAV as $(c + d) - (n - k) \phi$, and accordingly, player's $i$ payoff for not traveling in a CAV as $c - (n - k - 1) \phi$, where $k$ is the number of players who choose not to travel in a CAV other than player $i$. Thus, the payoff function is
\begin{equation}\label{absolute_payoffs}
    f_i(s_i, k) = 
        \begin{cases}
            c - (n - k - 1) \phi, & \; \text{if } s_i = NC, \\
            c + d - (n - k) \phi, & \; \text{if } s_i = C.
        \end{cases}
\end{equation}
For player $i$ the benefit of traveling in a CAV is denoted by $f_i(C, k)$ and the benefit of not traveling in a CAV by $f_i(NC, k)$, where $k$ is the number of players who decide to non-CAV travel other than player $i$. Note that \eqref{absolute_payoffs} depends not only on player $i$'s own action but also on $k$.

At this point, we can formally formulate our game denoted by $\mathcal{G}$. We have the finite set of players $\mathcal{I} = \{1, \ldots, n\}$ with $n > 2$; for each player $i$ the action set is $s_i \in \{NC, C\}$, and $f_i(s_i, k)$ with $k = 0, 1, \ldots, n - 1$ is the payoff function of player $i$. Thus, our game can be represented by the following tuple:
\begin{equation}\label{definition_of_game}
    \mathcal{G} = \left \langle \mathcal{I}, (S_i = \{NC, C\})_{i \in \mathcal{I}}, (f_i(s_i, k))_{i \in \mathcal{I}} \right \rangle.
\end{equation}

Next, we fully characterize game $\mathcal{G}$.

\begin{lemma}\label{payoff_difference}
    The payoff difference $\alpha = f_i(C, k) - f_i(NC, k)$ is positive and constant for all values $k \in [0, n - 1]$ and for all players $i \in \mathcal{I}$. Furthermore, $f_i(NC, k)$ and $f_i(C, k)$ are strictly increasing in $k$.
\end{lemma}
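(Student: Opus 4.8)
The plan is to prove both assertions by direct substitution into the payoff definition \eqref{absolute_payoffs}; the real content of the lemma is the algebraic observation that the $k$-dependent ``damage'' terms in the two rows differ by exactly one unit of $\phi$.

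First I would compute the payoff difference. Subtracting the two branches of \eqref{absolute_payoffs} gives
\begin{equation*}
\alpha = f_i(C,k) - f_i(NC,k) = \bigl(c + d - (n-k)\phi\bigr) - \bigl(c - (n-k-1)\phi\bigr) = d - \phi .
\end{equation*}
The base benefit $c$ cancels and the congestion terms collapse, so the right-hand side contains no $k$, no $i$, and no dependence on the other players' choices; this is precisely the claim that $\alpha$ is constant over $k \in \{0,1,\dots,n-1\}$ and over $i \in \mathcal{I}$.

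Next I would check positivity of $\alpha = d - \phi$. Using $\phi = e/n$ and the normalization $e = d+1$, we get $\alpha = d - (d+1)/n = \bigl(d(n-1)-1\bigr)/n$. Since $n > 2$ the denominator is positive, so it suffices to show $d(n-1) > 1$; from the standing assumption $d(n-2) > 2$ together with $d > 0$ we have $d(n-1) = d(n-2) + d > 2 + d > 1$, hence $\alpha > 0$.

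Finally, for the monotonicity claims I would take one-step differences in $k$. For the non-CAV row, $f_i(NC,k+1) - f_i(NC,k) = (n-k-1)\phi - (n-k-2)\phi = \phi$, and identically $f_i(C,k+1) - f_i(C,k) = \phi$; since $\phi = e/n > 0$ and $k$ ranges over consecutive integers, a strictly positive increment at every step is exactly strict monotonicity of both $f_i(NC,\cdot)$ and $f_i(C,\cdot)$. I do not expect a genuine obstacle: the only point needing care is the positivity of $\alpha$, where one must actually invoke the lower bound $d(n-2) > 2$ (and the hypothesis $n > 2$, both so that $n-2 > 0$ and so that the bound is non-vacuous) rather than treat it as self-evident.
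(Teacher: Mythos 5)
Your proof is correct and follows essentially the same route as the paper's: direct subtraction of the two branches to get $\alpha = d - \phi$, and one-step (resp.\ two-point) differencing in $k$ to get strict monotonicity. In fact you are more careful than the paper on the one nontrivial point --- the paper dismisses positivity of $d - \phi$ as ``clearly positive by definition of $c$ and $d$,'' whereas you correctly observe that it actually requires $d(n-1) > 1$ and derive this from the standing assumption $d(n-2) > 2$.
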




\begin{proof}
    We have $f_i(C, k) = c + d - (n - k) \phi$ and $f_i(NC, k) = c - (n - k - 1) \phi$ and so the difference is simply $f_i(C, k) - f_i(NC, k) = c + d - (n - k) \phi - [c - (n - k - 1) \phi] = d - \phi$. Hence, $\alpha$ is clearly positive by definition of $c$ and $d$ and also constant for all values $k = [0, n - 1]$. Furthermore, for $k > k '$, we have
        \begin{align}\label{a}
            f_i(NC, k) & = c - (n - k - 1) \phi, \quad \text{and} \\
            f_i(NC, k ') & = c - (n - k ' - 1) \phi. \label{b}
        \end{align}
    Subtracting \eqref{b} from \eqref{a} gives $f_i(NC, k) - f_i(NC, k ') = (k - k ') \phi > 0$,
    and so $f_i(NC, k) > f_i(NC, k ')$ for all $k$. In similar lines, we can show that the benefit of CAV travel, $f_i(C, k)$, is strictly increasing in $k$. Therefore, we conclude that $f_i(NC, k)$ and $f_i(C, k)$ are strictly increasing in $k$.
\end{proof}

From now on, the payoff difference is denoted by $\alpha$. We observe that the payoff difference, interpreted as the non-CAV travel cost, increases as $n$ increases. Interestingly enough, the payoff difference is independent of how many players choose not to travel in a CAV. In game-theoretic terms, we can interpret this as the strategy CAV travel dominating strategy non-CAV travel with a degree that is constant and independent of the other players who choose to CAV travel.

\begin{lemma}\label{nonnegativity and pareto relation}
    The payoff function \eqref{absolute_payoffs} is non-negative for all $k \in [0, n - 1]$, i.e., $f_i \geq 0$ for all $i \in \mathcal{I}$. Furthermore, mutual non-CAV travel is preferred to mutual CAV travel, i.e., $f_i(NC, n - 1) > f_i(C, 0)$ is a Pareto relation.
\end{lemma}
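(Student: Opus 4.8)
The plan is to treat the two claims separately; both reduce to substituting the parameter identities $\phi = e/n$, $e = d+1$, and the standing assumption $c > e$.

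For the non-negativity claim, I would first invoke Lemma \ref{payoff_difference}. Since $f_i(NC,k)$ and $f_i(C,k)$ are both strictly increasing in $k$, and since $f_i(C,k) - f_i(NC,k) = \alpha > 0$ for every $k$, the smallest value taken by the payoff \eqref{absolute_payoffs} over all actions and all $k \in [0,n-1]$ is $f_i(NC,0)$. It then suffices to show $f_i(NC,0) \geq 0$. Computing, $f_i(NC,0) = c - (n-1)\phi = c - \tfrac{(n-1)e}{n} = (c-e) + \tfrac{e}{n}$, which is strictly positive because $c > e$ and $e/n > 0$. Hence $f_i \geq 0$ for all $i \in \mathcal{I}$ and all admissible $k$.

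For the Pareto relation, I would evaluate the two extreme action profiles. When every player chooses $NC$, player $i$ counts $k = n-1$ other non-CAV travelers, so $f_i(NC,n-1) = c - \big(n-(n-1)-1\big)\phi = c$. When every player chooses $C$, player $i$ counts $k = 0$, so $f_i(C,0) = c + d - n\phi = c + d - e = c - 1$, using $e = d+1$. Since $c > c-1$, we obtain $f_i(NC,n-1) > f_i(C,0)$ for every player $i$ simultaneously, so by Definition \ref{definition4} the all-$NC$ profile Pareto dominates the all-$C$ profile, which is exactly the asserted Pareto relation.

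The algebra is routine; the one point that warrants care is the bookkeeping between an action profile and the counter $k$, which excludes player $i$ — one must check that ``mutual non-CAV travel'' corresponds to $k = n-1$ and ``mutual CAV travel'' to $k = 0$, and that the minimum of the payoff is indeed at $f_i(NC,0)$ rather than at one of the $C$-branch values. I do not expect a genuine obstacle: this lemma mainly certifies that the constructed payoffs are well-posed and that, despite CAV travel strictly dominating (Lemma \ref{payoff_difference}), universal non-CAV travel is the socially superior outcome, setting up the Prisoner's-dilemma equivalence to follow.
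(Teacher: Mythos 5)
Your proposal is correct and follows essentially the same route as the paper's proof: evaluate the payoffs at $k=0$ (using $c>e$ to get positivity), invoke the monotonicity in $k$ from Lemma \ref{payoff_difference}, and compare $f_i(NC,n-1)=c$ with $f_i(C,0)=c-1$ for the Pareto relation. Your extra step of using $\alpha>0$ to reduce the non-negativity check to the single value $f_i(NC,0)$ is a minor streamlining of the paper's argument, which checks both branches at $k=0$.
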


\begin{proof}
    We have $f_i(NC, 0) = c - (n - 1) \phi = c - (d + 1) + \phi$ and $f_i(C, 0) = c + d - n \cdot \phi = c - 1$. As $f_i(NC, k)$ and $f_i(C, k)$ are increasing in $k$, the result follows. Also, we have $f_i(C, 0) = c - 1$ and $f_i(NC, n - 1) = c - (n - (n - 1) - 1) \phi = c$ leading to $f_i(NC, n - 1) > f_i(C, 0)$ for all $i \in \mathcal{I}$.
\end{proof}

Lemma \ref{nonnegativity and pareto relation} establishes the fact that game $\mathcal{G}$ induces a Pareto relation, which implies that the equilibrium of mutual CAV travel is Pareto inferior to the alternative outcome, i.e., all players choose to non-CAV travel. This is significant since Pareto relations are directly associated with social dilemmas.

\begin{figure}[ht]
    \centering
        \includegraphics[width=1 \columnwidth]{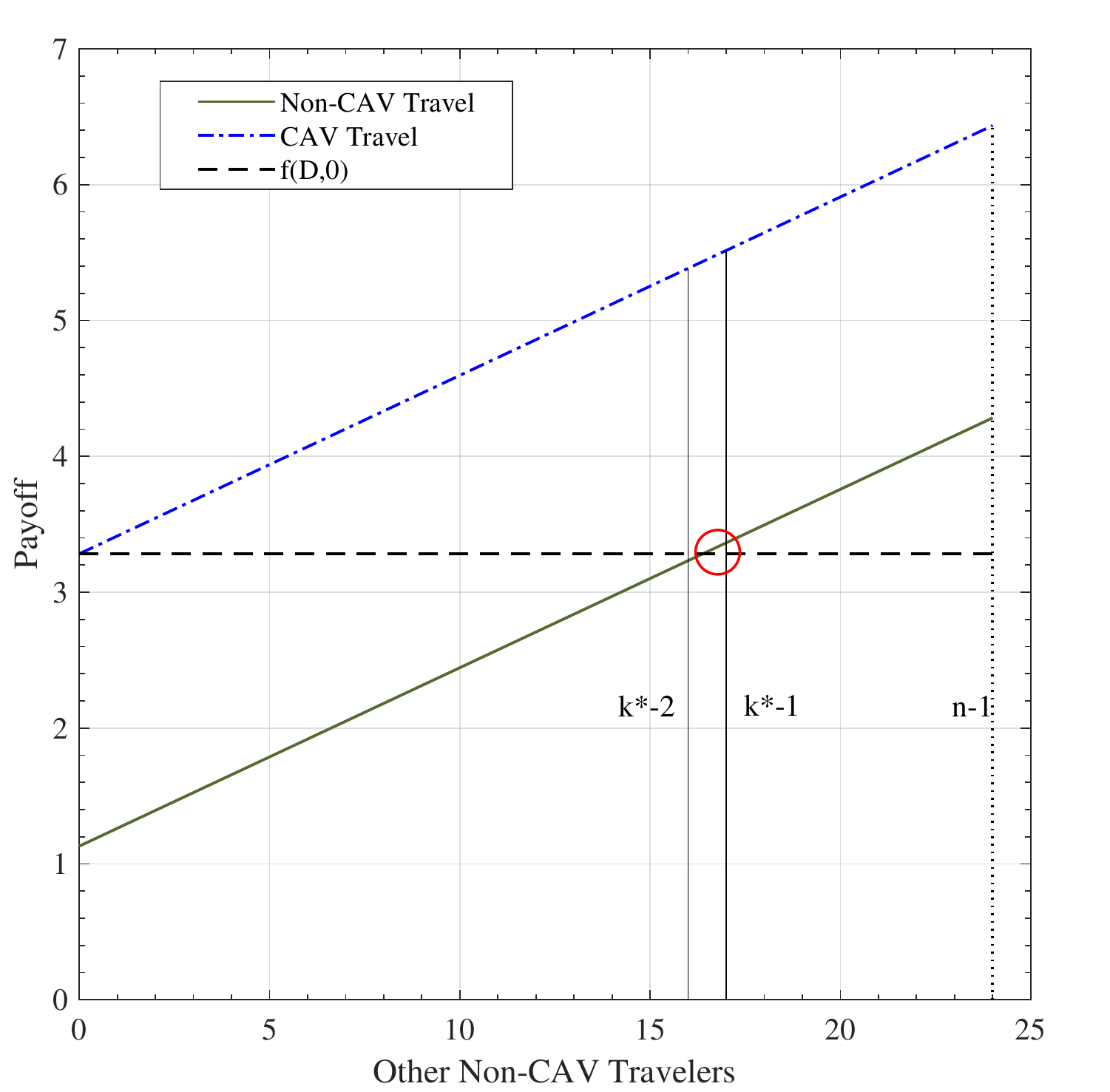}
            \caption{A visualization of the payoff function \eqref{absolute_payoffs} evaluated using the values $d = 2.2827$, $c = 4.2827$, and $n = 25$. We notice that, by focusing on the red circle, with a certain number of non-CAV travelers the overall utility of non-CAV travel is greater than the utility of CAV travel. This is the true meaning of a social dilemma in a CAV transportation context.}
    \label{payoff_visualization}
\end{figure}

\begin{theorem}\label{pr:PD}
    Game $\mathcal{G}$ defined in \eqref{definition_of_game} is equivalent to the PD game as both games share an equivalent incentive structure.
\end{theorem}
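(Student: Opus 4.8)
The plan is to show that $\mathcal{G}$ exhibits the two structural hallmarks of a Prisoner's dilemma --- a strictly dominant action for every player, together with a Pareto relation making the resulting equilibrium collectively inferior --- and then to match the four canonical PD payoffs explicitly. First I would invoke Lemma~\ref{payoff_difference}: the difference $\alpha = f_i(C,k) - f_i(NC,k) = d - \phi$ is strictly positive and, decisively, independent of $k$. Hence for every player $i \in \mathcal{I}$ and every strategy profile of the others, $C$ yields strictly more than $NC$, so $C$ (CAV travel) is a strictly dominant strategy and the all-$C$ profile is the unique Nash equilibrium (Definition~\ref{definition2}). This is the analogue of ``defect'' being dominant in the PD.

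Next I would use Lemma~\ref{nonnegativity and pareto relation}, which gives $f_i(NC, n-1) = c > c - 1 = f_i(C, 0)$ for all $i$: the all-$NC$ profile Pareto dominates the all-$C$ equilibrium. Steps one and two together already capture the PD's defining conflict between individual rationality and collective welfare. To make the equivalence of incentive structures precise, I would identify $C$ with ``defect'' and $NC$ with ``cooperate'' and read off, from \eqref{absolute_payoffs}, the reward $R = f_i(NC, n-1) = c$, the punishment $P = f_i(C, 0) = c - 1$, the temptation $T = f_i(C, n-1) = c + d - \phi$, and the sucker payoff $S = f_i(NC, 0) = c - (d+1) + \phi$, and then verify the ordering $T > R > P > S$. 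The inequality $R > P$ is immediate; both $T > R$ and $P > S$ reduce to the single condition $d > \phi$, i.e.\ $(d+1)/n < d$, equivalently $d(n-1) > 1$, which follows from the standing assumption $d(n-2) > 2$ since $d(n-1) = d(n-2) + d > 2$. One can also note $2R = 2c > 2c - 1 = T + S$, confirming that $\mathcal{G}$ is a genuine ($n$-player) PD.

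The computations here are routine, so the only real care needed is conceptual: the dominance of $C$ must be established \emph{uniformly in} $k$ --- which is exactly the content of Lemma~\ref{payoff_difference} --- so that the two-player PD intuition genuinely lifts to the $n$-player game, and the inequalities $T > R$ and $P > S$ must be strict, which is precisely where the quantitative hypotheses $c > e = d + 1$ and $d(n-2) > 2$ enter. I expect no further obstacle.
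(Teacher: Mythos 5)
Your proof is correct, and its core coincides with the paper's own argument: the paper's proof consists precisely of citing Lemma~\ref{payoff_difference} to conclude that $C$ strictly dominates $NC$ uniformly in $k$, and Lemma~\ref{nonnegativity and pareto relation} to conclude that universal non-CAV travel Pareto dominates the resulting all-$C$ equilibrium, and then asserting that this structure is that of the PD. What you add, and the paper does not do, is the explicit identification of the four canonical payoffs $T = c + d - \phi$, $R = c$, $P = c - 1$, $S = c - (d+1) + \phi$ and the verification of the ordering $T > R > P > S$ together with $2R > T + S$; your reductions are all correct, with both nontrivial strict inequalities collapsing to $d > \phi$, i.e.\ $d(n-1) > 1$, which indeed follows from the standing assumption $d(n-2) > 2$. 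This makes the phrase ``equivalent incentive structure'' precise where the paper leaves it at the level of ``dominant strategy plus Pareto-inferior equilibrium,'' so your version is strictly more complete than the published one. One small misattribution: the hypothesis $c > e = d + 1$ plays no role in the strictness of $T > R$ or $P > S$ (both need only $d > \phi$); it is used instead for the non-negativity of the payoffs in Lemma~\ref{nonnegativity and pareto relation}. This does not affect the validity of your argument.
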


\begin{proof}
    By Lemma \ref{payoff_difference}, we have $f_i(NC, k) < f_i(C, k)$ for all $k \in [0, n - 1]$ which implies that the dominant strategy by rational players in the game is CAV travel no matter how many players decide to non-CAV travel. By Lemma \ref{nonnegativity and pareto relation}, the social dilemma induced structure is equivalent to that of the Prisoner's dilemma.
\end{proof}

\begin{corollary}
    The game defined in \eqref{definition_of_game} and the PD game provide equivalent incentives to the players, and thus, they result in equivalent outcomes.
\end{corollary}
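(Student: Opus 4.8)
The plan is to leverage Theorem~\ref{pr:PD} directly: since $\mathcal{G}$ and the PD game have been shown to share the same incentive structure, it remains only to translate ``equivalent incentives'' into ``equivalent outcomes.'' First I would recall from Lemma~\ref{payoff_difference} that the payoff difference $\alpha = d - \phi$ is strictly positive and constant in $k$, so that for every player $i$ and every action profile $s_{-i}$ of the remaining players we have $f_i(C, k) > f_i(NC, k)$; hence CAV travel strictly dominates non-CAV travel for each player, exactly as defection strictly dominates cooperation in the Prisoner's dilemma.

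Next I would invoke the standing assumptions of Section~\ref{sec:math_prelims} that the players are rational and intelligent: a rational player never plays a strictly dominated strategy, so each player selects CAV travel, and the unique action profile surviving the elimination of strictly dominated strategies is the all-$C$ profile. The identical reasoning applied to the PD yields the all-defect profile. Thus in both games the predicted play --- equivalently, the unique Nash equilibrium, which here is an equilibrium in strictly dominant strategies --- is the ``everyone exploits the common resource'' profile, and the correspondence $C \leftrightarrow \text{defect}$, $NC \leftrightarrow \text{cooperate}$ maps one equilibrium onto the other.

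Finally I would close the loop on ``equivalent outcomes'' in the welfare sense by appealing to Lemma~\ref{nonnegativity and pareto relation}: the equilibrium profile all-$C$ is Pareto-dominated by the all-$NC$ profile, since $f_i(NC, n-1) > f_i(C, 0)$ for every $i \in \mathcal{I}$, which is precisely the defining feature of the PD, namely that mutual cooperation Pareto-dominates the mutual-defection equilibrium. Matching both the equilibrium and its Pareto relation to the cooperative profile across the two games establishes that $\mathcal{G}$ and the PD ``result in equivalent outcomes.''

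I do not anticipate a substantive obstacle here, since the statement is a corollary of Theorem~\ref{pr:PD} together with the rationality assumptions; the only point demanding care is making precise what ``equivalent outcomes'' is taken to mean. I would pin it down to the conjunction of two facts --- that the two games have the same essentially unique solution (the strictly dominant-strategy equilibrium) and that this solution bears the same Pareto relationship to the all-cooperate profile --- so that the corollary reads as an explicit, non-vacuous consequence of Theorem~\ref{pr:PD} and Lemma~\ref{nonnegativity and pareto relation} rather than a mere restatement of the former.
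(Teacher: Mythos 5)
Your proposal is correct and follows essentially the same route the paper takes: the corollary is left without an explicit proof there, but it is intended as an immediate consequence of Theorem~\ref{pr:PD}, whose own proof rests on exactly the two ingredients you use --- strict dominance of $C$ from Lemma~\ref{payoff_difference} and the Pareto relation $f_i(NC, n-1) > f_i(C,0)$ from Lemma~\ref{nonnegativity and pareto relation}. Your added care in pinning down ``equivalent outcomes'' as the dominant-strategy equilibrium plus its Pareto relationship is a faithful elaboration rather than a different argument.
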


Next, we show that by construction of the payoff function \eqref{absolute_payoffs}, non-CAV travel is more attractive from both the societal and the player's perspective.

\begin{proposition}
    Consider the game $\mathcal{G}$ defined in \eqref{definition_of_game}. Note that the benefit of CAV travel is given by $f_i(C, k)$ and the cost of non-CAV travel given by $\alpha$ (i.e., the payoff difference). Then the strategy non-CAV travel is socially desirable:
        \begin{equation}\label{socially_desirable}
            n f_i(C, k + 1) - (k + 1) \alpha > n f_i(C, k) - k \alpha, \; \forall i \in \mathcal{I}
        \end{equation}
    and also individually desirable:
        \begin{equation}\label{individually_desirable}
            f_i(NC, k + 1) > f_i(NC, k), \; \forall i \in \mathcal{I}.
        \end{equation}
\end{proposition}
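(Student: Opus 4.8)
The plan is to reduce both inequalities to elementary computations using the explicit form of the payoff function \eqref{absolute_payoffs} together with the monotonicity properties established in Lemma \ref{payoff_difference}. The single observation that drives everything is that each branch of \eqref{absolute_payoffs} increases by exactly $\phi$ when $k$ is incremented by one: from $f_i(NC, k) = c - (n - k - 1)\phi$ we get $f_i(NC, k+1) - f_i(NC, k) = \phi$, and from $f_i(C, k) = c + d - (n - k)\phi$ we get $f_i(C, k+1) - f_i(C, k) = \phi$.

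First I would dispatch the individually desirable inequality \eqref{individually_desirable}: it is immediate from $f_i(NC, k+1) - f_i(NC, k) = \phi > 0$, which is nothing more than the monotonicity-in-$k$ part of Lemma \ref{payoff_difference} specialized to consecutive values of $k$.

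For the socially desirable inequality \eqref{socially_desirable}, I would rearrange it into the equivalent form
\[
    n\bigl(f_i(C, k+1) - f_i(C, k)\bigr) > (k+1)\alpha - k\alpha = \alpha ,
\]
so that, using $f_i(C, k+1) - f_i(C, k) = \phi$, the claim becomes simply $n\phi > \alpha$. Here I invoke Lemma \ref{payoff_difference}, which gives $\alpha = d - \phi$, together with the construction $e = d + 1$ and $\phi = e/n$, which give $n\phi = e = d + 1$. Hence $n\phi - \alpha = (d+1) - (d - \phi) = 1 + \phi > 0$, which establishes \eqref{socially_desirable} for every $i \in \mathcal{I}$.

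I do not anticipate a genuine obstacle: the statement is essentially a corollary of Lemma \ref{payoff_difference} combined with the normalization $e = d + 1$. The only point requiring care is the bookkeeping — correctly recognizing that both branches of \eqref{absolute_payoffs} have per-unit increment $\phi$ in $k$, and substituting $\alpha = d - \phi$ and $n\phi = d + 1$ without sign errors.
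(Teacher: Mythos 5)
Your proof is correct and follows the same route as the paper, which simply states that both inequalities ``can be verified by substitution of the corresponding functions in \eqref{absolute_payoffs}''; you have carried out that substitution explicitly. Your reduction of \eqref{socially_desirable} to $n\phi > \alpha$ and the check $n\phi - \alpha = (d+1) - (d-\phi) = 1 + \phi > 0$ using $e = d+1$ is exactly the computation the paper leaves implicit.
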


\begin{proof}
    Both \eqref{socially_desirable} and \eqref{individually_desirable} can be verified by substitution of the corresponding functions in \eqref{absolute_payoffs}.
\end{proof}

Before we continue, let us introduce the notation $\lfloor x \rfloor$, which denotes the greatest integer that is less than $x$.

\begin{proposition}\label{minimum_number_cooperators}
    Consider game $\mathcal{G}$ defined in \eqref{definition_of_game}. There exists a unique integer $2 \leq k ^ * \leq n$ given by $k ^ * = \left \lfloor \frac{n d}{d + 1} \right \rfloor + 1$ such that
        \begin{equation}\label{k*_inequality}
            f_i(NC, k ^ * - 2) < f_i(C, 0) < f_i(NC, k ^ * - 1),
        \end{equation}
    where $k ^ *$ is the minimum number of non-CAV travelers.
\end{proposition}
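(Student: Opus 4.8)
The plan is to collapse the two-sided inequality \eqref{k*_inequality} into a single bound that identifies $k^*$ as the unique integer lying in an interval of length one, and then to read off both the closed form and the range $2 \le k^* \le n$. First I would substitute the explicit payoffs. From the proof of Lemma~\ref{nonnegativity and pareto relation} we already have $f_i(C,0) = c+d-n\phi = c-1$, since $\phi = (d+1)/n$; and for any integer argument $m$, \eqref{absolute_payoffs} gives $f_i(NC,m) = c-(n-m-1)\phi$. Hence $f_i(NC,m) > f_i(C,0)$ is equivalent to $(n-m-1)\phi < 1$, i.e.\ to $(n-m-1)(d+1) < n$, which rearranges to $(m+1)(d+1) > nd$, i.e.\ $m+1 > \frac{nd}{d+1}$; reversing every inequality, $f_i(NC,m) < f_i(C,0)$ is equivalent to $m+1 < \frac{nd}{d+1}$.

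Applying these two equivalences with $m = k^*-1$ and with $m = k^*-2$ respectively, the chain $f_i(NC,k^*-2) < f_i(C,0) < f_i(NC,k^*-1)$ is equivalent to $\frac{nd}{d+1} < k^* < \frac{nd}{d+1}+1$. Thus $k^*$ must be the unique integer strictly between $\frac{nd}{d+1}$ and $\frac{nd}{d+1}+1$; uniqueness is automatic because this open interval has length one. For existence and the closed form I would invoke the paper's convention that $\lfloor x \rfloor$ is the greatest integer strictly less than $x$: with $x = \frac{nd}{d+1}$ we get $\lfloor x \rfloor < x < \lfloor x \rfloor + 1$, so $k^* = \lfloor \frac{nd}{d+1} \rfloor + 1$ is exactly the integer in the required interval. (This presumes $\frac{nd}{d+1} \notin \mathbb{N}$, the generic situation for real $d$; otherwise the open interval contains no integer, a degenerate case I would note or exclude.)

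It remains to verify $2 \le k^* \le n$ from the standing assumptions $n > 2$, $d > 0$, and $d(n-2) > 2$. For the upper bound, $d < d+1$ gives $\frac{nd}{d+1} < n$, hence $k^* < \frac{nd}{d+1}+1 < n+1$, and being an integer $k^* \le n$. For the lower bound, $\frac{nd}{d+1} > 1$ is equivalent to $d(n-1) > 1$, and $d(n-1) = d(n-2) + d > 2 + d > 1$; hence $k^* > \frac{nd}{d+1} > 1$ and, being an integer, $k^* \ge 2$. This also justifies calling $k^*$ the minimum number of non-CAV travelers: with $k^*$ of them (so $k^*-1$ others, from the viewpoint of a participating player) non-CAV travel strictly beats the all-CAV payoff $f_i(C,0)$, whereas with only $k^*-1$ it does not.

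I do not anticipate a real obstacle here: the argument is elementary manipulation of the affine payoff formula \eqref{absolute_payoffs}. The only points demanding care are the off-by-one bookkeeping between ``$k=$ number of \emph{other} non-CAV travelers'' and ``$k^* =$ \emph{total} number of non-CAV travelers'', the reliance on the non-standard floor convention to obtain the stated closed form, and the treatment of the degenerate case $\frac{nd}{d+1} \in \mathbb{N}$.
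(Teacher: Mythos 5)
Your proposal is correct and follows essentially the same route as the paper's proof: substitute the explicit payoffs, collapse the two-sided inequality to $\frac{nd}{d+1} < k^* < \frac{nd}{d+1} + 1$, and read off the floor formula, with the same caveat that $\frac{nd}{d+1}$ must not be an integer. You go slightly further than the paper by actually verifying $2 \le k^* \le n$ from the standing assumptions $d(n-2) > 2$ and $n > 2$, a step the paper asserts without proof.
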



\begin{proof}
    By substitution, we get the following equations:
        \begin{align}\nonumber
            f_i(NC, k ^ * - 2) & = c - (n - (k ^ * - 2) - 1) \phi \\
            & = c - (n - k ^ * + 1) \phi, \\
            f_i(C, 0) & = c - 1, \\ \nonumber
            f_i(NC, k ^ * - 1) & = c - (n - (k ^ * - 1) - 1) \phi \\
            & = c - (n - k ^ *) \phi.
        \end{align}
    We want to find a unique $k ^ *$ such that \eqref{k*_inequality} holds. So, we have
        \begin{equation}
            c - (n - k ^ * + 1) \phi < c - 1 < c - (n - k ^ *) \phi
        \end{equation}
    which leads to
        \begin{equation}\label{k*_derivation}
            \frac{n d}{d + 1} < k ^ * < \frac{n d}{d + 1} + 1.
        \end{equation}
    As $k ^ *$ is an integer, the last inequality \eqref{k*_derivation} is true if and only if $k ^ * = \left \lfloor \frac{n d}{d + 1} \right \rfloor + 1$ and $\frac{n d}{d + 1}$ is not an integer number.
\end{proof}


Proposition \ref{minimum_number_cooperators} intuitively implies that we need at least $k ^ *$ non-CAV travelers so that the benefit a player receives when they decide non-CAV travel will be greater than the dominant strategy $f(C, 0)$ (see in Fig. \ref{payoff_visualization} the red circle).

Next, we seek a way to characterize an outcome of the game in terms of preference. Now, in most cases, identifying the ``best" outcome is not possible, but there are certain situations that might be better from a societal standpoint.

\begin{proposition}\label{pareto_domination}
    The strategy of universal CAV travel, $f(C, 0)$, is Pareto dominated by outcomes with $k ' \geq k ^ * - 1$.
\end{proposition}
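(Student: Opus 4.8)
The plan is to verify Definition~\ref{definition4} directly. First I would record the baseline payoff vector: in the profile where every player selects $C$, each player $i$ sees $k=0$ other non-CAV travelers and therefore earns exactly $f_i(C,0)=c-1$, the value already computed in Lemma~\ref{nonnegativity and pareto relation}. So universal CAV travel is a ``flat'' outcome in which every coordinate of the payoff vector equals $c-1$; to Pareto dominate it, an outcome only has to keep every player's payoff at least $c-1$ and strictly raise at least one of them.

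Next I would fix an arbitrary outcome of $\mathcal{G}$ in which $m$ players choose $NC$ with $m \geq k^*$, so that the index seen by each non-CAV traveler is $k'=m-1 \geq k^*-1$, consistent with the statement. There are only two kinds of players to examine. A player $i$ who chooses $NC$ observes $k' \geq k^*-1$ other non-CAV travelers, so combining the monotonicity of $f_i(NC,\cdot)$ from Lemma~\ref{payoff_difference} with the right-hand inequality $f_i(C,0)<f_i(NC,k^*-1)$ of Proposition~\ref{minimum_number_cooperators} gives $f_i(NC,k') \geq f_i(NC,k^*-1)>f_i(C,0)$. A player $i$ who chooses $C$ (when one exists, i.e.\ $m<n$) observes all $m=k'+1 \geq k^* \geq 2$ non-CAV travelers, so the monotonicity of $f_i(C,\cdot)$ from Lemma~\ref{payoff_difference} together with $k'+1 \geq 1$ gives $f_i(C,k'+1)>f_i(C,0)$. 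Either way every player ends up strictly above the baseline $c-1$, which is far more than Definition~\ref{definition4} requires; the degenerate case $m=n$, where no $C$-player exists, is settled by the first inequality alone.

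The one thing that needs care --- and what I would flag as the main obstacle --- is the bookkeeping of the argument of the payoff function: ``an outcome with $k'$'' must be read as ``each non-CAV traveler sees $k'$ other non-CAV travelers'' (equivalently, $m=k'+1$ players pick $NC$), since it is exactly $f_i(C,0)<f_i(NC,k^*-1)$ that we invoke. The companion inequality $f_i(NC,k^*-2)<f_i(C,0)$ from Proposition~\ref{minimum_number_cooperators} also shows why the bound $k' \geq k^*-1$ is tight: with $k'=k^*-2$ any player choosing $NC$ would fall strictly below $c-1$, destroying the Pareto relation. Once this indexing convention is fixed, the whole argument reduces to the two monotonicity comparisons above together with Proposition~\ref{minimum_number_cooperators}, with no further computation needed.
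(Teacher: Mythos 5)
Your proof is correct, and it follows the same backbone as the paper's --- the strict monotonicity of the payoffs in $k$ from Lemma~\ref{payoff_difference} combined with the right-hand inequality $f_i(C,0) < f_i(NC, k^*-1)$ of Proposition~\ref{minimum_number_cooperators} --- but it differs in two worthwhile ways. First, where you simply cite Proposition~\ref{minimum_number_cooperators} for the base case $k' = k^*-1$, the paper re-derives that inequality from scratch: it writes $\lfloor nd/(d+1) \rfloor = nd/(d+1) - \varepsilon$ with $\varepsilon > 0$, computes $f_i(NC,k^*-1) = c - 1 + (\varepsilon+1)\phi$ explicitly, and subtracts $f_i(C,0) = c-1$ to get $(\varepsilon+1)\phi > 0$; your citation is the cleaner route since that computation duplicates work already done. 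Second, and more substantively, you check the payoff of \emph{every} player in the dominating outcome, including the free-riders who still choose $C$ and earn $f_i(C, k'+1) > f_i(C,0)$, whereas the paper's proof only verifies $f_i(NC,k') > f_i(C,0)$ for the non-CAV travelers and never addresses the $C$-players of a mixed profile. Since Definition~\ref{definition4} requires the inequality for all $n$ coordinates of the payoff vector, your version closes a small gap in the published argument (or, on the most charitable reading, makes explicit an indexing convention the paper leaves implicit --- exactly the bookkeeping point you flag). Your observation that $f_i(NC,k^*-2) < f_i(C,0)$ makes the bound tight matches the paper's final remark about the case $k' < k^*-1$.
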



\begin{proof}
    We want to show that the outcomes with $k ' \geq k ^ * - 1$ Pareto dominate the dominant strategy of universal CAV travel. We only have to check two cases, namely $k ' \geq k ^ * - 1$ and $k ' < k ^ * - 1$. For $k ' = k ^ * - 1$, we have
        \begin{equation}
            f_i(NC, k ') = c - \left(n - \left \lfloor \frac{n d}{d + 1} \right \rfloor - 1 \right) \phi.
        \end{equation}
    Let $\left \lfloor \frac{n d}{d + 1} \right \rfloor = \frac{n d}{d + 1} - \varepsilon$, where $\varepsilon > 0$, so that
        \begin{align}\nonumber
            f_i(NC, k ') & = c - \left(n - \frac{n d}{d + 1} - \varepsilon - 1 \right) \left (\frac{d + 1}{n} \right) \\
            & = c - 1 + (\varepsilon + 1) \phi.
        \end{align}
    Subtracting $f_i(C, 0)$ from $f_i(NC, k ')$ gives $(\varepsilon + 1) \phi > 0$. Furthermore, for $k ' > k ^ * - 1$, note that $f_i$ is a strictly increasing function in $k$, thus $f_i(NC, k ') > f_i(NC, k ^ * - 1)$ which implies $f_i(NC, k ') > f_i(C, 0)$. Thus, for all players $i$, $f_i(NC, k ') > f_i(C, 0)$, where $k ' \geq k ^ * - 1$. On the other hand, if $k ' < k ^ * - 1$, then $f_i(NC, k') < f_i(C, 0)$ for all players $i$ by the first inequality relation in Proposition \ref{minimum_number_cooperators}. Hence, all outcomes which satisfy $k ' \geq k ^ * - 1$ Pareto dominate the dominant strategy of universal CAV travel.
\end{proof}


We note that by construction, the payoff function \eqref{absolute_payoffs} mutual non-CAV travel is the social optimum but, as a consequence of Proposition \ref{pareto_domination}, the decision to non-CAV travel is worthwhile to a player only if there are $k ^ *$ or more non-CAV travelers. Otherwise, everyone is no worse off at the dominant strategy of universal CAV travel. This gives rise to the notion of the state of \textit{minimally effective non-CAV travel}.

\begin{definition}
    The state of minimally effective non-CAV travel is the minimum number of non-CAV travelers, $k ^ *$, such that an outcome Pareto dominates the universal CAV travel equilibrium.
\end{definition}

Clearly, the state of minimally effective non-CAV travel is given by Propositions \ref{minimum_number_cooperators} and \ref{pareto_domination}. This is an important notion that can help in the derivation of the optimal utilization of CAVs in the emerging transportation systems.

Next, we discuss two solution approaches applied in our game $\mathcal{G}$. Our goal is to derive conditions that ensure a coalition of non-CAV travel, which are at least as large as the minimum state of non-CAV travel.

\section{Nash Equilibria and the Population Threshold}\label{sec:first_approach}

\subsection{Preference Structure}

Usually, in Game Theory, we assume that players are only interested in their own payoff. One of our goals is to study, in a more realistic setting, the players' social behavior, and so we impose to our game $\mathcal{G}$ a ``preference structure."

A preference structure allows us to model a particularly interesting scenario: the rational players are interested not only on their own payoff but also on the relative payoff share they receive, i.e., how their standing compares to that of others \cite{Michiardi2003}. The authors in \cite{Bolton2000} designed the ``equity, reciprocity, and competition (ERC)" model which is a simple model capable of handling a large population of players with an ``adjusted utility" function constructed on the premise that players are motivated by both their pecuniary payoff and their relative payoff standing. Notice that we changed our terminology of payoff function to adjusted utility function here. We do this to differentiate the difference between the absolute payoffs that players get from \eqref{absolute_payoffs} and the adjusted payoffs players will get in a preference structure. One of the reasons we use the ERC model is because it has been successful in explaining the behavior of selfish players in social experiments than other standard modeling techniques.


Now, we observe that players rarely play against the same other players, and so it is reasonable enough to analyze each game as one-shot. To further justify this, we only have to argue that it is highly unlikely to ``meet" other travelers in a major metropolitan city. Let the absolute payoff of player $i$ be given by $f_i$ from \eqref{absolute_payoffs}.
Each player $i$ seeks to maximize the expected utility of her motivation function $v_i = v_i(f_i, \sigma_i)$, where
\begin{equation}\label{eqn_relative_share}
    \sigma_i = \sigma_i(f_i, \gamma, n) =
        \begin{cases}
            f_i / \gamma, \quad & \text{if } \gamma > 0, \\
            1 / n, \quad & \text{if } \gamma = 0.
        \end{cases}
\end{equation}
Equation \eqref{eqn_relative_share} represents player $i$'s relative share of the payoff and $\gamma = \sum_{j = 1} ^ {n} f_j$ is the total pecuniary payout. We can think of the motivation function $v_i$ as the expected benefits that drive the players' behavior. We assume that $v_i$ is twice differentiable.

Next, we allow each player to be characterized by $a_i / b_i$ which is the ratio of weights that are attributed to the pecuniary and relative components of the motivation function. For example, strict relativism is represented by $a_i / b_i = 0$, so $\arg \max_{\sigma_i} v_i(\gamma \sigma_i, \sigma_i) = \pi = 1 / 2$, where $\pi_i(\gamma)$ is implicitly defined by $v_i(\gamma \pi_i, \pi_i) = v_i(0, 1 / n)$ for $\pi_i \leq 1 / n$. Strict narrow self-interest is the limiting case $a_i / b_i \to \infty$, so $\arg \max_{\sigma_i} v_i(\gamma \sigma_i, \sigma_i) = 1$ and $s \to 0$ \cite{Bolton2000}. Based on the above, the adjusted utility function then is given by:
\begin{equation}\label{adjusted_utility}
    u_i(f_i, \sigma_i) = a_i q(f_i) + b_i r(\sigma_i),
\end{equation}
where $q(\cdot)$ is strictly increasing, strictly concave, and differentiable; $r(\cdot)$ is differentiable, concave, and has its maximum at $\sigma_i = 1 / n$. Let us discuss a simple example from \cite{Bolton2000}.
\begin{example}
    We can explicitly define both $q$ and $r$ as:
        \begin{equation}\label{specific_preference}
            q(f_i) = f_i \quad \text{and} \quad r(\sigma_i) = - \frac{1}{2} \left (\sigma_i - \frac{1}{n} \right ) ^ 2,
        \end{equation}
    where function $q(\cdot)$ expresses the standard preferences for the payoff functions \eqref{absolute_payoffs}; function $r(\cdot)$ describes in a precise way the collective importance of equal division of the payoffs (this is also called the ``comparative effect.") Consequently, the further the allocation moves from player $i$ receiving an equal share, the higher the loss from the comparative effect.
\end{example}


\subsection{Analysis for the Nash Equilibria and the Threshold of Non-CAV Travel}

Our analysis in this subsection follows \cite{Michiardi2003}, but we apply it to our game $\mathcal{G}$ defined in \eqref{definition_of_game} along with the preference structure. Our goal is to study what influences strategic agents to non-CAV travel in our game $\mathcal{G}$.

We start our analysis by looking at the necessary and sufficient conditions for player $i$ to non-CAV travel, i.e.,
\begin{equation}
    u_i(f_i(NC, k + 1)) \geq u_i(f_i(NC, k)).
\end{equation}
Equivalently, we have from \cite{Michiardi2003} that $a_i / b_i \leq \delta(k)$, where
\begin{equation}\label{cooperation_delta_condition}
    \delta(k) = \frac{r \bigg(\frac{f_i(NC, k + 1)}{n f_i(C, k + 1) - (k + 1) \alpha} \bigg) - r \bigg(\frac{f_i(C, k)}{n f_i(C, k) - k \alpha} \bigg)}{q \big(f_i(C, k) \big) - q \big(f_i(NC, k + 1) \big)}.
\end{equation}
From \eqref{cooperation_delta_condition}, we can deduce that player $i$ will non-CAV travel if, and only if, there is overcompensation for the loss in absolute gain by moving closer to the average gain \cite{Michiardi2003}. Hence, we can state the general conditions of a NE:
\begin{align}\label{1st_condition_NE}
    & a_i / b_i \leq \delta(k - 1), \quad && \text{for $k$ players non-CAV travel}, \\ \label{2nd_condition_NE}
    & a_i / b_i \geq \delta(k), \quad && \text{for $n - k$ players CAV travel}.
\end{align}
We now have a better understanding of how the number of other non-CAV travelers, and its value can make non-CAV travel a rational strategy.


\begin{lemma}\label{lemma_delta_positive}
    For a given distribution of ERC-types, $\delta (k - 1) > 0$ is necessary but not sufficient to get a coalition size of $k$ where $n - k$ players free-ride. For a given payoff structure with $\delta (k - 1) > 0$, there exist ERC-types such that $k$ is an equilibrium coalition size.
\end{lemma}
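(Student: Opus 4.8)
The plan is to treat the two halves separately, both resting on the equilibrium characterization \eqref{1st_condition_NE}--\eqref{2nd_condition_NE}: a profile in which $k$ players non-CAV travel and the other $n-k$ CAV travel is a Nash equilibrium exactly when every non-CAV traveler $i$ has $a_i/b_i\le\delta(k-1)$ and every CAV traveler $j$ has $a_j/b_j\ge\delta(k)$. I would first record why the two thresholds differ — a non-CAV traveler contemplating a switch sees $k-1$ remaining non-CAV travelers, whereas a CAV traveler contemplating a switch sees $k$ — since this indexing is the only delicate point and is precisely what \eqref{cooperation_delta_condition} packages.

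For necessity I would fix a (non-degenerate) distribution of ERC-types, so that $a_i/b_i>0$ for every $i$, and assume a Nash equilibrium with $k\ge 1$ non-CAV travelers exists; taking any such player $i$, \eqref{1st_condition_NE} gives $0<a_i/b_i\le\delta(k-1)$, hence $\delta(k-1)>0$. I would flag here that admitting pure relativists, for whom $a_i/b_i=0$, would only force the weaker bound $\delta(k-1)\ge 0$, so the standing convention that ERC-types have strictly positive ratios is exactly what makes the strict inequality necessary. To see that $\delta(k-1)>0$ is not sufficient, I would point out that $\delta(\cdot)$ is determined by the payoff primitives $c,d,n$ alone and carries no information about where the fixed type ratios sit relative to the thresholds $\delta(k-1)$ and $\delta(k)$: a given distribution may place too many or too few players below $\delta(k-1)$, so no equilibrium of coalition size $k$ need exist — for instance, a distribution of uniformly large type ratios forces the all-CAV equilibrium (coalition size $0$).

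For the converse I would argue by explicit construction: given a payoff structure with $\delta(k-1)>0$, choose any $k$ players and give each of them a type ratio in $(0,\delta(k-1)]$, which is possible precisely because $\delta(k-1)>0$, and give each of the remaining $n-k$ players a type ratio at least $\max\{\delta(k),0\}$, which is always possible since type ratios range over $[0,\infty)$ (the limiting regime being strict narrow self-interest). By \eqref{1st_condition_NE}--\eqref{2nd_condition_NE} the profile in which exactly those $k$ players non-CAV travel and the rest CAV travel is then a Nash equilibrium, so $k$ is an equilibrium coalition size for this choice of ERC-types.

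I expect the main obstacle to be bookkeeping rather than depth: making the ``necessary but not sufficient'' clause precise requires committing to the non-degeneracy convention on the ERC-types (otherwise necessity weakens to $\delta(k-1)\ge 0$), and the construction must invoke \eqref{1st_condition_NE}--\eqref{2nd_condition_NE} with the correct $\delta$-indexing so that neither a designated non-CAV traveler nor a designated CAV traveler has a profitable unilateral deviation. Everything beyond this reduces to substitution into \eqref{absolute_payoffs} and \eqref{cooperation_delta_condition}.
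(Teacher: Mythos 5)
Your proposal is correct and follows essentially the same route as the paper's proof: necessity and non-sufficiency are both read off the equilibrium conditions \eqref{1st_condition_NE}--\eqref{2nd_condition_NE}, and existence is obtained by placing $k$ type ratios at or below $\delta(k-1)$ and the remaining $n-k$ at or above $\delta(k)$. You are in fact somewhat more careful than the paper, which neither flags the strict-relativist edge case $a_i/b_i = 0$ (where necessity would weaken to $\delta(k-1)\ge 0$) nor spells out the explicit type construction.
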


\begin{proof}
    If $\delta (k - 1) < 0$, it is impossible for a coalition to form in the game of size $k$. On the other hand, if $a_i / b_i > \delta (k - 1)$ then condition \eqref{1st_condition_NE} cannot hold for any player. However, conditions \eqref{1st_condition_NE} and \eqref{2nd_condition_NE} imply that if $\delta(k - 1) > 0$, then there are types $\left(a_i / b_i \right)_{i \in \mathcal{I}}$ such that $k$ players non-CAV travel and $n - k$ players free-ride.
\end{proof}

\begin{proposition}\label{existence}
    By construction of the game $\mathcal{G}$ together with the ERC preference structure, there always exists a Nash equilibrium of universal CAV travel.
\end{proposition}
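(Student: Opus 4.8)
The strategy is to verify the equilibrium inequalities \eqref{1st_condition_NE}--\eqref{2nd_condition_NE} directly for the profile in which all $n$ players choose $C$, and to observe that they hold regardless of the realized distribution of ERC-types $(a_i/b_i)_{i\in\mathcal{I}}$. Universal CAV travel is the coalition of non-CAV travelers of size $k = 0$, so condition \eqref{1st_condition_NE} is vacuous and the only requirement is \eqref{2nd_condition_NE} with $k=0$, namely $a_i/b_i \geq \delta(0)$ for every $i$. Since the type ratios satisfy $a_i/b_i \geq 0$, it is enough to prove $\delta(0) \leq 0$.

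To do that I would inspect the numerator and denominator of $\delta(0)$ separately in \eqref{cooperation_delta_condition}. In the numerator, the second argument of $r$ collapses at $k=0$ to $\frac{f_i(C,0)}{n f_i(C,0) - 0\cdot\alpha} = \frac{1}{n}$, i.e. exactly the share each player gets under universal CAV travel; since $r$ attains its maximum at $\sigma_i = 1/n$, the difference $r\!\big(\frac{f_i(NC,1)}{n f_i(C,1)-\alpha}\big) - r(1/n)$ is non-positive, so the numerator of $\delta(0)$ is $\leq 0$. For the denominator, using $\phi = (d+1)/n$ together with $f_i(C,0) = c-1$ (from the proof of Lemma \ref{nonnegativity and pareto relation}) and $f_i(NC,1) = c - (n-2)\phi$, one finds $f_i(C,0) - f_i(NC,1) = (n-2)\phi - 1 = \frac{d(n-2)-2}{n} > 0$ by the standing assumption $d\cdot(n-2) > 2$; hence, since $q$ is strictly increasing, $q(f_i(C,0)) - q(f_i(NC,1)) > 0$. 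Combining, $\delta(0) = (\text{non-positive})/(\text{positive}) \leq 0 \leq a_i/b_i$ for all $i$, so \eqref{2nd_condition_NE} holds at $k=0$ and universal CAV travel is a Nash equilibrium.

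The argument is short and I do not expect a genuine obstacle; the one point that needs care is the boundary bookkeeping at $k = 0$ — one must confirm that the normalizer $n f_i(C,0) - 0\cdot\alpha$ appearing in the relative-share term of $\delta$ really is the total pecuniary payout when everyone CAV-travels, so that the corresponding share is precisely $1/n$ and therefore sits at the maximizer of $r$. It is also worth emphasizing that the conclusion uses only the structural properties already assumed for $q$ (strictly increasing) and $r$ (concave with maximum at $1/n$) and the assumption $d(n-2)>2$; nothing depends on the explicit functional forms of Example 1, which is exactly why such an equilibrium exists for \emph{every} ERC preference structure.
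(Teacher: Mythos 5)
Your argument is correct, and it is genuinely more explicit than what the paper offers: the paper disposes of Proposition~\ref{existence} in one line, asserting that it ``follows directly from Lemma~\ref{lemma_delta_positive},'' without ever evaluating $\delta$ at the boundary $k=0$. Your route --- observe that universal CAV travel corresponds to a coalition of size $k=0$, so \eqref{1st_condition_NE} is vacuous and only $a_i/b_i \geq \delta(0)$ must hold, then show $\delta(0)\leq 0$ by splitting numerator and denominator --- is the argument the paper's one-liner implicitly relies on, made rigorous. Both of your key computations check out: at $k=0$ the second relative-share argument collapses to $f_i(C,0)/(n f_i(C,0)) = 1/n$, which is exactly the maximizer of $r$, so the numerator of $\delta(0)$ is non-positive; and $f_i(C,0)-f_i(NC,1) = (n-2)\phi - 1 = \tfrac{d(n-2)-2}{n} > 0$ by the standing assumption $d(n-2)>2$, so the denominator is strictly positive since $q$ is strictly increasing. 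Combined with $a_i/b_i \in [0,\infty)$ (the paper's parametrization runs from strict relativism at $0$ to strict self-interest at $\infty$), the equilibrium condition holds for every type vector. What your approach buys is precisely the type-independence of the conclusion --- the paper's lemma only speaks to when coalitions of size $k\geq 1$ \emph{can} form for \emph{some} types, whereas your verification shows the all-$C$ profile survives for \emph{all} ERC types and all admissible $q,r$, which is the actual content of the proposition. The one cosmetic caveat: your sentence ``nothing depends on the explicit functional forms of Example 1'' is right, but it does depend on $r$ attaining its maximum at $1/n$, which is part of the paper's standing assumptions on $r$, so no generality is lost.
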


Proposition \ref{existence} follows directly from Lemma \ref{lemma_delta_positive}. We are though interested in finding a threshold of players that decide to non-CAV travel. The next proposition will help us do that.

\begin{proposition}\label{derive_k_cooperation}
    The necessary condition for an equilibrium of non-CAV travel $\delta (k - 1) > 0$ is equivalent to
        \begin{multline}\label{necessary_condition_delta}
            n \left[(k - 1) f_i(C, k) - k f_i(C, k - 1) \right] + \\
             + \left[n f_i(C, k - 1) - (k - 1) \alpha \right] \left[2 k - n \right] > 0.
        \end{multline}
\end{proposition}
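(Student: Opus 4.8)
The plan is to peel \eqref{cooperation_delta_condition} apart one layer at a time, reducing the sign condition $\delta(k-1)>0$ to a polynomial inequality by clearing denominators. First I would write $\delta(k-1)$ explicitly by substituting $k\mapsto k-1$ into \eqref{cooperation_delta_condition}, so that its numerator is $r(\sigma_1)-r(\sigma_2)$ with $\sigma_1 = f_i(NC,k)/\bigl(n f_i(C,k)-k\alpha\bigr)$ and $\sigma_2 = f_i(C,k-1)/\bigl(n f_i(C,k-1)-(k-1)\alpha\bigr)$, and its denominator is $q(f_i(C,k-1))-q(f_i(NC,k))$. I would then show this denominator is strictly positive: since $q$ is strictly increasing it suffices that $f_i(C,k-1)>f_i(NC,k)$, and a direct substitution gives $f_i(C,k-1)-f_i(NC,k)=\alpha-\phi=d-2\phi$, which is positive because $\phi=(d+1)/n$ and the standing assumption $d(n-2)>2$ is exactly $d>2\phi$. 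Hence the sign of $\delta(k-1)$ equals the sign of $r(\sigma_1)-r(\sigma_2)$, so $\delta(k-1)>0$ if and only if $r(\sigma_1)>r(\sigma_2)$.

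Second, I would locate $\sigma_1$ and $\sigma_2$ relative to the peak $1/n$ of $r$. Both aggregates are positive, since $n f_i(C,m)-m\alpha\ge n f_i(C,m)-n\alpha=n f_i(NC,m)\ge 0$ by Lemma \ref{nonnegativity and pareto relation}, with strict inequality whenever $m<n$. A short computation then shows $\sigma_1<1/n$ iff $n>k$ and $\sigma_2>1/n$ iff $k>1$, both of which hold for a coalition with $1<k<n$; so $\sigma_1<1/n<\sigma_2$. Using the symmetric quadratic $r$ of \eqref{specific_preference}, $r(\sigma_1)>r(\sigma_2)$ is equivalent to $|\sigma_1-1/n|<|\sigma_2-1/n|$, and with $\sigma_1<1/n<\sigma_2$ this is simply $\sigma_1+\sigma_2>2/n$.

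Third, I would clear denominators in $\sigma_1+\sigma_2>2/n$ by multiplying through by $n\bigl(n f_i(C,k)-k\alpha\bigr)\bigl(n f_i(C,k-1)-(k-1)\alpha\bigr)>0$, substitute $f_i(NC,k)=f_i(C,k)-\alpha$, and expand. The leading products $2n^2 f_i(C,k)f_i(C,k-1)$ cancel between the two sides; factoring out $\alpha>0$ from the remainder and collecting the coefficients of $f_i(C,k)$, of $f_i(C,k-1)$, and of $\alpha$ yields precisely the left-hand side of \eqref{necessary_condition_delta}. Since every step above is an equivalence, this establishes the claim.

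I expect the main obstacle to be the bookkeeping in this last expansion: one must verify that the quadratic-in-payoffs terms really do cancel and that the leftover linear combination rearranges \emph{exactly} into \eqref{necessary_condition_delta} — for instance, checking that the coefficient $n(k-n)$ of $f_i(C,k-1)$ produced by the expansion matches $-nk+(2k-n)n$ obtained by regrouping the bracketed terms of \eqref{necessary_condition_delta}, and similarly for the $\alpha$-coefficient $(k-1)(n-2k)$. A secondary subtlety is the sign/positioning argument of the second paragraph: it rests on the aggregates being positive and on $1<k<n$, so one should not skip the short verification that the relevant configuration payoffs are non-negative, which is where Lemma \ref{nonnegativity and pareto relation} and the assumptions $c>e=d+1$ and $d(n-2)>2$ re-enter.
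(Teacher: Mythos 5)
Your proposal is correct and follows essentially the same route as the paper's proof: both reduce $\delta(k-1)>0$ to the ``deviation from equal share'' inequality $\sigma_2 - 1/n > 1/n - \sigma_1$, clear denominators against the positive aggregates $n f_i(C,k)-k\alpha$ and $n f_i(C,k-1)-(k-1)\alpha$, cancel the quadratic terms, and factor out $\alpha$ to reach \eqref{necessary_condition_delta}. The only difference is that you explicitly justify the first reduction (positivity of the $q$-denominator via $f_i(C,k-1)-f_i(NC,k)=d-2\phi>0$, and the placement $\sigma_1<1/n<\sigma_2$ together with the symmetric quadratic $r$ of \eqref{specific_preference}), a step the paper simply asserts.
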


\begin{proof}
    In order to obtain $\delta (k ^ * - 1) > 0$, it is necessary that by choosing the strategy CAV travel, a player further deviates from the equal share $1 / n$ than by choosing strategy non-CAV travel, i.e.,
        \begin{align}\nonumber
            \frac{f_i(C, k - 1)}{n f_i(C, k - 1) - (k - 1) \alpha} & - \frac{1}{n} > \\
            & \frac{1}{n} - \frac{f_i(NC, k)}{n f_i(C, k) - k \alpha}.
        \end{align}
    Rearranging and by eliminating the denominators, we get
        \begin{multline}\nonumber
            n f_i(C, k - 1)(n f_i(C, k) - k \alpha) + \\
            + n (f_i(C, k) - \alpha)(n f_i(C, k - 1) - (k - 1) \alpha)\\
            - 2 (n f_i(C, k - 1) - (k - 1) \alpha)(n f_i(C, k) - k \alpha) > 0,
        \end{multline}
    where we have used $\alpha = f_i(C, k) - f_i(NC, k)$.
    Substituting the payoff functions from \eqref{absolute_payoffs} and further simplification yield
        \begin{multline}
            n \left[(k - 1) f_i(C, k) - k f_i(C, k - 1) \right] + n (2 k - n) f_i(C, k - 1) \\
            - (k - 1) \alpha [2 k - n] > 0.\label{inequality_leading_to_coalition}
        \end{multline}
    Simplifying \eqref{inequality_leading_to_coalition} gives
        \begin{multline}
            n \left[(k - 1) f_i(C, k) - k f_i(C, k - 1) \right] + \\
            \left[n f_i(C, k - 1) - (k - 1) \alpha \right] \left[2 k - n \right] > 0.
        \end{multline}
    Therefore, the result follows immediately.
\end{proof}

We are now ready to prove the main result of the section.

\begin{theorem}\label{first_big_result}
    For any given vector of types, a rational player chooses to non-CAV travel when at least half of the players non-CAV travel.
\end{theorem}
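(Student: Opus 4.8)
The plan is to establish the claim as a \emph{necessary} condition that is forced independently of the ERC types. Concretely, I will show that $\delta(k-1) > 0$, which by Lemma~\ref{lemma_delta_positive} together with conditions \eqref{1st_condition_NE}--\eqref{2nd_condition_NE} is required before any player can rationally join a non-CAV coalition of size $k$, can hold only when $2k > n$. The crucial structural observation is that the equivalent inequality \eqref{necessary_condition_delta} from Proposition~\ref{derive_k_cooperation} involves \emph{only} the absolute payoffs $f_i(C,k)$, $f_i(C,k-1)$, and $\alpha$, and never the ratios $a_i/b_i$. Hence the threshold it produces is identical for every type profile, and this is exactly what upgrades a single sign computation to the universal ``for any given vector of types'' conclusion.

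First I would evaluate the type-independent bracket $n\big[(k-1)f_i(C,k) - k f_i(C,k-1)\big]$ by substituting $f_i(C,k) = c + d - (n-k)\phi$ from \eqref{absolute_payoffs}. A direct simplification collapses the $k$-dependent coefficients of $\phi$ into a single $n\phi$, giving $n\big(n\phi - (c+d)\big) = n\big(e - (c+d)\big) = n(1-c)$, where I use $n\phi = e = d+1$. Since $c > e = d+1 > 1$, this quantity is strictly negative and, decisively, depends neither on $k$ nor on any type.

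Next I would show that the second factor $n f_i(C,k-1) - (k-1)\alpha$ in \eqref{necessary_condition_delta} is strictly positive. Rewriting it via $\alpha = f_i(C,\cdot) - f_i(NC,\cdot)$ as $(n-k+1)f_i(C,k-1) + (k-1)f_i(NC,k-1)$, it is a sum of non-negative terms by Lemma~\ref{nonnegativity and pareto relation}; moreover $f_i(C,k-1) \geq f_i(C,0) = c-1 > 0$ by the monotonicity established in Lemma~\ref{payoff_difference}, and its coefficient $n-k+1 \geq 1$ for $1 \leq k \leq n$, so the whole factor is strictly positive. Condition \eqref{necessary_condition_delta} then reads $n(1-c) + \big[\text{positive}\big](2k-n) > 0$, i.e. the positive factor multiplied by $(2k-n)$ must exceed $n(c-1) > 0$. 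This is impossible unless $2k - n > 0$, forcing $k > n/2$. Since ``$k$ other non-CAV travelers'' means a non-CAV coalition of at least $k$, I conclude that a rational player non-CAV travels only when strictly more than half---hence at least half---of the players do, and this bound is the same for every type profile.

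The main obstacle I anticipate is conceptual rather than computational: the statement reads like a sufficiency claim (``chooses to non-CAV travel when\dots''), but sufficiency for \emph{every} type is false, since a player with $a_i/b_i$ above the finite threshold $\delta(k-1)$ always prefers CAV travel regardless of how many others abstain. The correct and provable reading is the \emph{necessary} direction (``only when''), so the care required is to argue cleanly that the type-independence of \eqref{necessary_condition_delta} is precisely what promotes the sign computation to the universal conclusion, and to note honestly that the derivation yields the strict bound $2k > n$ in place of the loosely stated ``at least half.''
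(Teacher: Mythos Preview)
Your proposal is correct and follows essentially the same route as the paper: both evaluate the two pieces of \eqref{necessary_condition_delta}, obtain $n(1-c)<0$ for the first bracket and strict positivity of the factor $n f_i(C,k-1)-(k-1)\alpha$, and then read off the threshold $2k>n$. Your version is in fact tighter on the logical direction---you cleanly deduce the \emph{necessary} implication $\delta(k-1)>0\Rightarrow 2k>n$, whereas the paper asserts the converse via a hand-wavy ``domination'' claim before nonetheless concluding the same necessity statement; your closing remark about the sufficiency reading being false for large $a_i/b_i$ is well taken.
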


\begin{proof}
    We only have to check on what conditions relation \eqref{necessary_condition_delta} is positive. By construction the payoff functions are non-negative, and thus $n f_i(C, k - 1) - (k - 1) \alpha > 0$, i.e.,
        \begin{equation}
            n f_i(C, k - 1) - (k - 1) \alpha = n (c - 1) + (k - 1)(1 + \phi)
        \end{equation}
    which is clearly positive for all values of $n, c$, and $k$. Hence, the second component of \eqref{necessary_condition_delta} is positive for $2k - n > 0$. Next, we look at the first component of \eqref{necessary_condition_delta}. By substituting the payoff function from \eqref{absolute_payoffs}, we get
        \begin{equation}
            (k - 1) f_i(C, k) - k f_i(C, k - 1) = 1 - c,
        \end{equation}
    which is negative for all values of $c$. We observe though that the second component is much bigger and dominates the first component as long as $2 k - n > 0$. Hence, relation \eqref{necessary_condition_delta} is positive and we have $\delta (k - 1) > 0$ for $2 k > n$. Therefore, for any given vector of types, if a player cooperates at the equilibrium, then at least half of the players cooperate.
\end{proof}

The interpretation of Theorem \ref{first_big_result} is that for any coalition to exist with size $k \geq 2$, a minimum of $n / 2$ players must join. We showed that given the specific payoff structure of our game $\mathcal{G}$ and along with the ERC preference structure, a coalition of players choosing strategy non-CAV travel could be formed provided that it is rather large. Thus, even if we impose a social mechanism that enforces strategy non-CAV travel in a society of travelers and satisfying \eqref{necessary_condition_delta}, a coalition of at least size $n / 2$ must be formed to create an equilibrium of non-CAV travel. Therefore, the social mechanism will require significant influence over the players' behaviors in order to create a state of effective non-CAV travel. On the other hand, this result is promising as it shows that a social solution can potentially prevent self-centered and destructive behavior towards society.

\section{Creating an Institutional Arrangement}\label{sec:second_approach}


In this section, we take advantage of the equivalency of our game $\mathcal{G}$ to the Prisoner's dilemma in order to use the non-cooperative game model of institutional arrangements framework of \cite{Okada1993}. We prove in Theorem 3 that the ratio of non-CAV travel and CAV travel in a deregulated society as the number of players increases, tends to zero. In other words, as the society becomes larger and larger, the incentive to cooperatively agree not to travel in a CAV tends to zero.

Players are free to create a social institution that binds them by selecting their actions. In other words, players agree to have an institutional arrangement with the purpose of enforcing an agreement of non-CAV travel. The first stage is the creation of a social institution, and this is done through participation negotiations, and thus the first stage is called ``participation decision stage." All players have to decide whether they will participate in negotiations for collective decision making, or not, without any knowledge of each others' decisions. The outcome of the game at this first stage is either that some group of players is formed or not. All players decide to participate in negotiations or not based on their expectations about what will happen in the rest of the game. The possibility of non-CAV travel is significantly affected by the number of players. That means that the outcome of the institutional arrangements depends on the players' decisions in the first stage \cite{Okada1993}.

\begin{remark}
    In contrast to Cooperative Game Theory, there is no external binding enforcement, and players are free to make their decisions (whether it is beneficial to them only). Thus, we treat the institutional arrangements framework as a non-cooperative game.
\end{remark}

The goal here is to investigate the question: \textit{does the number of travelers affect the possibility of non-CAV travel?}

The next proposition addresses the basic cases.

\begin{proposition}(\hspace{1sp}\cite{Okada1993})
    Let $d_i = 1$ denote a player $i$'s decision to participate in bargaining for installing an enforcement agency; otherwise $d_i = 0$. When $k ^ * = n$, the participation decision stage has a unique solution $d ^ * = (1, \ldots, 1)$.
\end{proposition}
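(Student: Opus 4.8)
The plan is to collapse the multi-stage institutional-arrangements game into a single normal-form ``participation game'' by backward induction, and then to exploit the fact that the regime $k^* = n$ makes that reduced game very rigid. In the final stage, any player not bound by an enforcement agency plays her dominant action, which by Theorem \ref{pr:PD} is CAV travel. In the negotiation stage, a set $S$ of participants will (unanimously) agree to install the agency --- thereby committing every member to non-CAV travel --- only if doing so is profitable for each member, i.e.\ only if a member's payoff $f_i(NC,|S|-1)$ (all of $S$ non-CAV travels, everyone else CAV travels) exceeds her fallback payoff $f_i(C,0)$ under universal CAV travel; by Proposition \ref{minimum_number_cooperators} and the monotonicity in Lemma \ref{payoff_difference} this holds precisely when $|S|\ge k^*$. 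Since $k^*=n$, the agency forms if and only if $|S|=n$, i.e.\ if and only if \emph{every} player participates. Hence the reduced participation game has payoffs (computed via Lemma \ref{nonnegativity and pareto relation}): each player gets $f_i(NC,n-1)=c$ when $d=(1,\dots,1)$, and $f_i(C,0)=c-1$ for every other participation profile.

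First I would check that $d^*=(1,\dots,1)$ is a Nash equilibrium: if the other $n-1$ players participate, participating yields $c$, whereas deviating to $d_i=0$ drops the coalition below $k^*=n$, dissolves the agency, and yields $c-1<c$ by Lemma \ref{nonnegativity and pareto relation}. Then, for the uniqueness part, I would strengthen this to show that $d_i=1$ is in fact weakly dominant for each player. The key observation is that a single player's participation decision is \emph{pivotal} --- it is what determines whether the agency forms --- exactly when the other $n-1$ players all participate, since with $k^*=n$ there is no room left for any free-rider. In that pivotal case participating strictly beats abstaining ($c>c-1$); in every non-pivotal case the outcome is universal CAV travel irrespective of $d_i$, so the player is indifferent. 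Therefore $d_i=1$ weakly dominates $d_i=0$ for all $i\in\mathcal I$.

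Finally, uniqueness follows because $(1,\dots,1)$ is the unique profile in which every player plays a weakly dominant strategy --- equivalently, it is the unique Pareto-undominated Nash equilibrium, every other profile giving all players the strictly smaller payoff $c-1$. I expect the main obstacle to be exactly this last point: the reduced game also admits ``degenerate'' Nash equilibria in which two or more players abstain and everyone is indifferent, so the word ``solution'' must be read as selecting the equilibrium in weakly dominant strategies (equivalently, the Pareto-efficient one) rather than an arbitrary Nash equilibrium. I would therefore make this refinement explicit and argue it is the natural one for the non-cooperative institutional-arrangements framework of \cite{Okada1993}, in which abstaining is never strictly beneficial while participation is strictly beneficial whenever it changes the collective outcome; in the $k^*=n$ regime that single exception suffices to pin down full participation.
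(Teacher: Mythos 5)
The paper does not actually prove this proposition --- it is imported verbatim from \cite{Okada1993} and stated without proof --- so there is no in-paper argument to compare against. Your reconstruction is correct and is essentially the argument from Okada's original paper: backward-induct to reduce the game to a participation game in which the coalition agrees iff $|S|\ge k^*$ (which, with $k^*=n$, means iff everyone participates), note that participation is weakly dominant because a player is pivotal exactly when all others participate, and conclude. You are also right to flag the uniqueness caveat: profiles with two or more abstainers are degenerate Nash equilibria of the reduced game (note that a profile with exactly one abstainer is not, since that player is pivotal), so ``unique solution'' only holds under a refinement; this matches Okada's own solution concept for the participation stage, which restricts to equilibria in undominated strategies. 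The only small thing I would tighten is the claim that the agency forms ``only if'' profitable for each member: you should say a word about why unanimous agreement in the bargaining subgame occurs \emph{if and only if} $f_i(NC,|S|-1)>f_i(C,0)$, i.e., why the disagreement point is universal CAV travel, since that is the step that genuinely relies on the structure of Okada's negotiation stage rather than on anything proved in this paper.
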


It is interesting enough that in the special case $n = 2$, both players agree to create an enforcement agency and also to non-CAV travel in the institutional arrangements.

\begin{definition}
    The incentive ratio of non-CAV travel and CAV travel can be defined as a positive number given by:
        \begin{equation}
            \beta = \frac{f_i(NC, k ^ * - 1) - f_i(C, 0)}{f_i(C, k ^ * - 1) - f_i(NC, k ^ * - 1)}.
        \end{equation}
\end{definition}

In words, $\beta$ represents the ratio of players' incentive to form the minimum group for non-CAV travel, i.e., the group of $k ^ *$ non-CAV travelers, over their incentive to deviate unilaterally from the minimum group for non-CAV travel. Given our game $\mathcal{G}$ defined in $\eqref{definition_of_game}$, we have
\begin{equation}\label{beta}
    \beta = \frac{k ^ * \phi - d}{d - \phi} = \frac{k ^ * (d + 1) - n d}{d (n - 1) - 1}.
\end{equation}

\begin{proposition}(\hspace{1sp}\cite{Okada1993})\label{prop:okada}
    The uncooperative solution of the institutional arrangements for our game $\mathcal{G}$ prescribes the following player behavior:
        \begin{enumerate}
            \item If $n = \left\lfloor \frac{n d}{d + 1} \right\rfloor + 1 = k ^ *$, then all players participate in bargaining and they agree to non-CAV travel.
            \item If $n \geq \left\lfloor \frac{n d}{d + 1} \right\rfloor + 2$, then every player participates in bargaining with probability $t(n)$ satisfying:
                \begin{equation}\nonumber
                    \beta = \sum_{k ^ * \leq k \leq n - 1}\frac{(n - k ^ *) \cdot \ldots \cdot(n - k)}{k ^ * \cdot \ldots \cdot k} \bigg( \frac{t}{1 - t} \bigg) ^ {k - k ^ * + 1},
                \end{equation}
        \end{enumerate}
    where $k ^ *$ and $\beta$ are given by Proposition \ref{minimum_number_cooperators} and \eqref{beta}, respectively.
\end{proposition}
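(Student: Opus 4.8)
The plan is to obtain the statement as a specialization of the subgame-perfect analysis of institutional arrangements in \cite{Okada1993}. By Theorem \ref{pr:PD} our game $\mathcal{G}$ is strategically equivalent to an $n$-player Prisoner's dilemma; it is symmetric, its payoffs $f_i(\cdot,k)$ are strictly increasing in the number $k$ of non-CAV travelers (Lemma \ref{payoff_difference}), and it possesses a well-defined minimal effective coalition size $k^{\ast}$ with $f_i(NC,k^{\ast}-1) > f_i(C,0) > f_i(NC,k^{\ast}-2)$ (Propositions \ref{minimum_number_cooperators}--\ref{pareto_domination}). These are precisely the structural hypotheses under which Okada's equilibrium characterization applies, so the work reduces to (i) pinning down the subgame-perfect behavior in the bargaining and enforcement stages, (ii) reducing the participation-decision stage to a single indifference condition, and (iii) reading off the coefficient $\beta$ and the combinatorial weights.

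First I would resolve the continuation game after a given participation profile. If $j$ players participate, then --- because non-CAV travel is individually worthwhile only when at least $k^{\ast}$ travelers adopt it (Proposition \ref{pareto_domination}) --- the bargaining stage installs an enforcement agency binding its members to non-CAV travel if and only if $j \geq k^{\ast}$; in that event all $j$ participants non-CAV travel while the $n-j$ outsiders free-ride with CAV travel, and for $j < k^{\ast}$ no binding agreement is reached, so the dominant-strategy outcome of universal CAV travel prevails and every player gets $f_i(C,0)$. For part 1, when $n = k^{\ast}$ every player is pivotal for the agency to form, and the preceding proposition (the $d^{\ast}=(1,\dots,1)$ result) already gives full participation followed by unanimous non-CAV travel as the unique solution.

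Next, for part 2 ($n \geq k^{\ast}+1$), I would look for the symmetric mixed equilibrium in which each player participates with probability $t \in (0,1)$. Conditioning on the number $k$ of the other $n-1$ players who participate --- binomially distributed with parameters $n-1$ and $t$ --- the difference between a player's expected payoff from participating and that from abstaining vanishes on every event except: the event that exactly $k^{\ast}-1$ others participate, on which the deviating player is pivotal for installing the agency and the gap equals $f_i(NC,k^{\ast}-1)-f_i(C,0)$; and the events with $k \geq k^{\ast}$ others participating, on which the player can instead free-ride off an agency that forms without her, the gap being $-\bigl(f_i(C,k)-f_i(NC,k)\bigr) = -\alpha$, constant by Lemma \ref{payoff_difference}. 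Setting the expected difference to zero, dividing through by $\alpha$, and using $\binom{n-1}{k}/\binom{n-1}{k^{\ast}-1} = \frac{(n-k^{\ast})\cdots(n-k)}{k^{\ast}\cdots k}$ collapses the indifference condition into exactly the displayed equation, with $\beta = \bigl(f_i(NC,k^{\ast}-1)-f_i(C,0)\bigr)/\alpha$; substituting the closed forms of \eqref{absolute_payoffs} reproduces $\beta = (k^{\ast}\phi - d)/(d-\phi)$ as in \eqref{beta}. Existence and uniqueness of $t(n)$ then follow because the right-hand side is continuous and strictly increasing in $t$ on $(0,1)$, running from $0$ to $+\infty$, while $\beta$ is a fixed positive number, its numerator and denominator both being positive by Proposition \ref{minimum_number_cooperators} (which yields $k^{\ast}\phi > d > \phi$).

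I expect the main obstacle to be steps (ii)--(iii): carrying out the binomial bookkeeping so that the indifference condition collapses cleanly into the product-form weights rather than an unwieldy sum of binomial terms, and getting the enforcement logic exactly right at the boundary --- namely that a participant is pivotal precisely when the other participants number $k^{\ast}-1$, and that an outsider's free-riding payoff is $f_i(C,\cdot)$ evaluated at the realized coalition size. Once the equation is in the stated form, the qualitative conclusion the authors are ultimately after, $t(n)\to 0$ as $n\to\infty$, is immediate from \eqref{beta}: the numerator $k^{\ast}(d+1)-nd$ stays bounded (since $k^{\ast} = \lfloor nd/(d+1)\rfloor + 1$) while $d(n-1)-1\to\infty$, forcing $\beta\to 0$ and hence $t(n)\to 0$.
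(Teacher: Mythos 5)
The paper offers no proof of this proposition at all: it is imported verbatim from \cite{Okada1993}, with the only ``work'' being the identification of $k^*$ and $\beta$ for the specific payoffs of $\mathcal{G}$. Your proposal therefore goes well beyond what the paper does, and it is a faithful reconstruction of Okada's argument specialized to this game. The key steps check out: the continuation game gives agency formation iff the number of participants is at least $k^*$ (since $f_i(NC,j-1)>f_i(C,0)$ iff $j\geq k^*$ by Proposition \ref{minimum_number_cooperators}); the payoff gap between participating and abstaining is zero when fewer than $k^*-1$ others participate, equals $f_i(NC,k^*-1)-f_i(C,0)$ on the pivotal event, and equals $-\alpha$ when $k\geq k^*$ others participate; and the ratio $\binom{n-1}{k}/\binom{n-1}{k^*-1}=\frac{(n-k^*)\cdots(n-k)}{k^*\cdots k}$ together with the binomial weights collapses the indifference condition into the displayed sum with $\beta=\bigl(f_i(NC,k^*-1)-f_i(C,0)\bigr)/\alpha=(k^*\phi-d)/(d-\phi)$, matching \eqref{beta}. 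Your observations that $\alpha$ must be constant in $k$ (Lemma \ref{payoff_difference}) for the denominator of $\beta$ to be well defined, and that the right-hand side is strictly increasing from $0$ to $+\infty$ in $t$ so that $t(n)$ exists and is unique, are correct additions that the paper does not supply. The only caveat is that yours is still an outline --- the ``binomial bookkeeping'' and the subgame-perfection argument in the negotiation stage are asserted rather than executed --- but the outline is the right one and every quantity you name is consistent with the paper's definitions.
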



We are ready now to prove our main result of this section, which has to do with the limiting behavior of $\beta$.

\begin{theorem}\label{limiting_behavior_beta}
    As the number of players increases, the incentive ratio of non-CAV travel and CAV travel vanishes, i.e., $\beta$ tends to zero as $n$ tends to infinity.
\end{theorem}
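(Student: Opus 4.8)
The plan is to argue directly from the closed-form expression for the incentive ratio obtained in \eqref{beta},
\begin{equation}\nonumber
    \beta = \frac{k^{*}(d+1) - nd}{d(n-1) - 1},
\end{equation}
showing that the numerator stays bounded by a constant independent of $n$ while the denominator diverges. Here $d > 0$ is a fixed parameter and $\phi = (d+1)/n$; the standing constraint $d(n-2) > 2$ is automatically satisfied once $n$ is large, so it imposes no restriction on the limit.

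First I would invoke Proposition \ref{minimum_number_cooperators}, which not only identifies $k^{*} = \lfloor nd/(d+1) \rfloor + 1$ but also supplies the two-sided estimate $\frac{nd}{d+1} < k^{*} < \frac{nd}{d+1} + 1$. Multiplying this through by $d+1 > 0$ and subtracting $nd$ gives $0 < k^{*}(d+1) - nd < d+1$, so the numerator of $\beta$ is strictly positive and bounded above by the $n$-independent constant $d+1$. For the denominator, since $d$ is fixed we have $d(n-1) - 1 \to \infty$ as $n \to \infty$, and it is already positive for the relevant $n$ because $d(n-2) > 2$ forces $dn > 2d+2 > d+1$, hence $d(n-1)-1 > 0$. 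Combining the two estimates yields $0 < \beta < \frac{d+1}{d(n-1)-1}$, and a squeeze argument as $n \to \infty$ gives $\beta \to 0$; this simultaneously re-confirms that $\beta$ is a well-defined positive quantity.

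I do not anticipate any serious obstacle here: the single point that needs care is to treat $d$ as fixed while $n$ grows and to handle the floor function exclusively through the inequality of Proposition \ref{minimum_number_cooperators} rather than attempting to evaluate it exactly. If desired, one could sharpen the conclusion to the quantitative rate $\beta = O(1/n)$, which is the behavior illustrated by the numerical study that follows.
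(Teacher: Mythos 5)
Your proof is correct and follows essentially the same route as the paper: substitute the closed form of $k^{*}$ into $\beta$, use the floor-function estimate from Proposition~\ref{minimum_number_cooperators} to reduce the numerator to a bounded quantity, and let the diverging denominator $d(n-1)-1$ give the limit. If anything, your squeeze via $0 < k^{*}(d+1) - nd < d+1$ is slightly cleaner than the paper's argument, which writes $\left\lfloor \frac{nd}{d+1} \right\rfloor = \frac{nd}{d+1} - \varepsilon$ and passes to the limit as though $\varepsilon$ were a constant, whereas your uniform bound correctly accounts for the fact that this fractional part varies with $n$.
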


\begin{proof}
    Substitute $k ^ * = \left \lfloor \frac{n d}{d + 1} \right \rfloor + 1$ into $\beta$ to get
        \begin{equation}
            \beta = \frac{\left(\left \lfloor \frac{n d}{d + 1} \right \rfloor + 1 \right)(d + 1) - n d}{d (n - 1) - 1}.
        \end{equation}
%
%
    By Proposition \ref{minimum_number_cooperators}, $\frac{n d}{d + 1}$ is not an integer, thus we can write $\left \lfloor \frac{n d}{d + 1} \right \rfloor = \frac{n d}{d + 1} - \varepsilon$, where $\varepsilon > 0$. Now taking the limit of $\beta$ as $n$ goes to infinity gives
        \begin{equation}
            \lim_{n \to \infty} \beta = \lim_{n \to \infty} \frac{\left(\left \lfloor \frac{n d}{d + 1} \right \rfloor + 1 \right)(d + 1) - n d}{d (n - 1) - 1},
        \end{equation}
    or equivalently
        \begin{align}
            \lim_{n \to \infty} \beta & = \lim_{n \to \infty}\frac{(\frac{n d}{d + 1} - \varepsilon + 1)(d + 1) - n d}{d (n - 1) - 1} \\
            & = \lim_{n \to \infty} \frac{n d + ( - \varepsilon + 1)(d + 1) - n d}{d (n - 1) - 1} \\
            & = \lim_{n \to \infty} \frac{( - \varepsilon + 1)(d + 1)}{d (n - 1) - 1}.
        \end{align}
    We divide both numerator and denominator by $1 / n$ and using the standard limit $\lim_{x \to \infty} \frac{1}{x} = 0$ gives the result, i.e.,
        \begin{equation}
            \frac{\lim_{n \to \infty} \frac{( - \varepsilon + 1)(d + 1)}{n}}{d - \lim_{n \to \infty} \frac{d}{n} - \lim_{n \to \infty} \frac{1}{n}} = 0.
        \end{equation}
    Thus, we conclude that $\lim_{n \to \infty} \beta = 0$.
\end{proof}

To complement our understanding, we performed a numerical study of the limiting behavior of $t(n)$, given in Table \ref{tab:numerical}. In the table, we have included the additional probabilities: $p_A(n)$ shows the probability of some group of size $k ^ *$ or greater reaching an agreement, $p_I(n)$ the probability of each player being an insider of some group with at least $k ^ *$ non-CAV travelers, and $p_F(n)$ is the probability of each player being a free rider, i.e., existing outside of a group of at least $k ^ *$ non-CAV travelers.

\begin{table}[ht]
    \centering
    \begin{tabular}{c|c|c|c|c|c|c}
        $n$ & $k ^ *$ & $\beta$ & $t(n)$ & $p_A(n)$ & $p_I(n)$ & $p_F(n)$ \\ \midrule
        3 & 3 & 0.930 & 1.000 & 1.000 & 1.000 & 0.000 \\ 
        4 & 3 & 0.166 & 0.333 & 0.111 & 0.086 & 0.025 \\ 
        5 & 4 & 0.253 & 0.503 & 0.192 & 0.160 & 0.032 \\ 
        6 & 5 & 0.302 & 0.602 & 0.236 & 0.204 & 0.031 \\ 
        7 & 5 & 0.066 & 0.139 & 0.001 & 0.001 & 0.000 \\ 
        8 & 6 & 0.129 & 0.269 & 0.006 & 0.005 & 0.001 \\ 
        9 & 7 & 0.175 & 0.363 & 0.014 & 0.011 & 0.003 \\ 
        10 & 7 & 0.037 & 0.078 & 0.000 & 0.000 & 0.000 \\ 
        11 & 8 & 0.083 & 0.174 & 0.000 & 0.000 & 0.000 \\ 
        12 & 9 & 0.120 & 0.252 & 0.000 & 0.000 & 0.000 \\ 
        13 & 9 & 0.023 & 0.048 & 0.000 & 0.000 & 0.000 \\ 
        14 & 10 & 0.059 & 0.124 & 0.000 & 0.000 & 0.000 \\ 
        15 & 11 & 0.089 & 0.188 & 0.000 & 0.000 & 0.000 \\ 
        20 & 14 & 0.034 & 0.072 & 0.000 & 0.000 & 0.000 \\ 
        25 & 17 & 0.002 & 0.004 & 0.000 & 0.000 & 0.000 \\ 
        30 & 21 & 0.033 & 0.070 & 0.000 & 0.000 & 0.000 \\ 
        35 & 24 & 0.011 & 0.023 & 0.000 & 0.000 & 0.000 \\ 
        40 & 28 & 0.033 & 0.069 & 0.000 & 0.000 & 0.000 \\ 
        45 & 31 & 0.016 & 0.033 & 0.000 & 0.000 & 0.000 \\ 
        50 & 34 & 0.002 & 0.004 & 0.000 & 0.000 & 0.000 
    \end{tabular}
    \caption{Numerical study for game $\mathcal{G}$ with the institutional arrangements where $d \approx 2$.}
    \label{tab:numerical}
\end{table}

\begin{figure}[ht]
    \centering
        \includegraphics[width=1 \columnwidth]{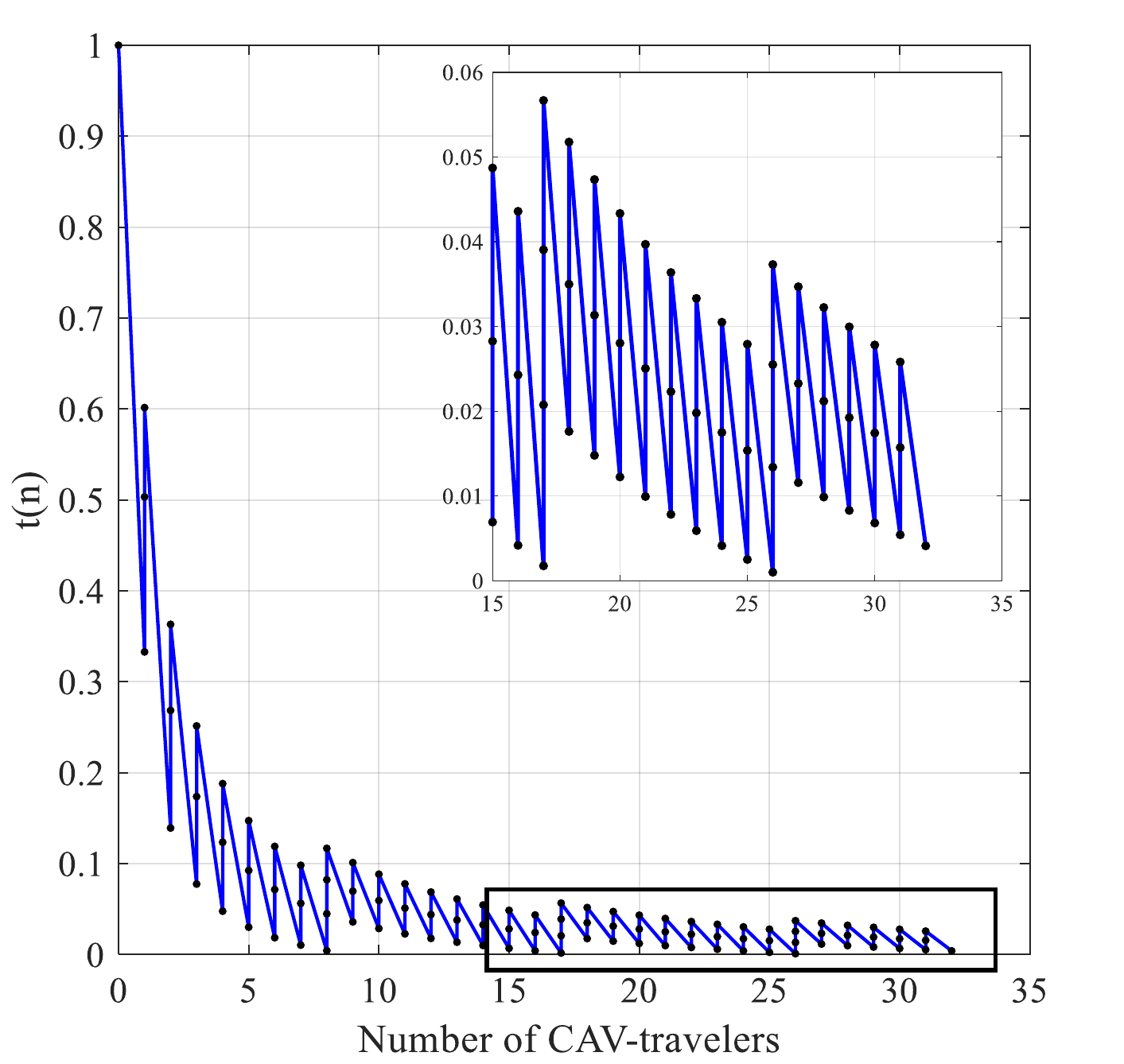}
        \caption{Plot of $t(n)$ as a function of the number of CAV travelers. The blue line shows the sequence of $t(n)$ as $n$ increases from $0$ to $100$.}\label{fig:tvsd}
\end{figure}

From Theorem \ref{limiting_behavior_beta}, the incentive ratio goes to zero as the number of players increases. In addition, from the numerical study summarized in Table \ref{tab:numerical} and Figure \ref{fig:tvsd}, the likelihood of bargaining for an institution, $t(n)$ and probability of being an insider, $p_A(n)$ approach zero as $n$ gets large. This implies that for large societies, the impact of self-realized non-CAV travel is non-existent, and the universal CAV travel strategy dominates. For small societies with $k ^ * = n$, it is a certainty that players agree to bargain and create an institution for CAV travel (which is not ideal).

\section{Conclusion}\label{sec:conclusions}

In this paper, we addressed the problem of the social consequences of decision-making of human interaction with connectivity and automation in a game-theoretic setting. We formulated the problem as a multiplayer normal-form game and showed that the incentive structure is equivalent to the PD game. The proposed approach has the benefit of capturing the social dilemma that is expected to arise from the future social-mobility dilemma. We considered two different approaches: one was with a preference structure and one with institutions. We investigated and derived conditions for the unselfish strategy, i.e., non-CAV travel, to appear in the game. In the first case, we came up with conditions for a NE and derived a threshold for non-CAV travel; in the second case, we allowed players to create an institution that can enforce non-CAV travel. We concluded that the incentive ratio of non-CAV travel over CAV travel tends to zero as the number of players increases.

Ongoing work includes the design of a framework that analyzes the impact of decision-making by relaxing the assumptions of complete information (e.g., inducing a Bayesian setting) aiming to capture the informational limitations of players in the game.

\bibliographystyle{IEEEtran}
\bibliography{references}

\end{document}